\numberwithin{equation}{section}
\newtheorem{rema}{Remark}
\newtheorem{prop}{Proposition}
\newtheorem{conj}{Conjecture}
\newtheorem{theorem}{Theorem}
\newtheorem{lemma}[theorem]{Lemma}
\newtheorem{example}{Example}
\newtheorem{algorithm}{Algorithm}
\DeclareMathAlphabet{\mathcalligra}{T1}{calligra}{m}{n}
\title [Inner Product in Highest-Weight Representation]
{Inner Product in Highest-Weight Representation}
\author{Chuanzhong Li$^1$}
\author{Zhisheng Liu$^2$}
\author{Bao Shou$^3$}
\address{$^1$ College of Mathematics and Systems Science\\ Shandong University of Science and Technology\\ Qingdao,266590, P.R.China}
\address{$^2$Institute  of Theoretical  Physics\\
Chinese Academy  of Sciences\\
Beijing, 100190, China}
\address{$^3$Center  of Mathematical  Sciences,
Zhejiang University,
Hangzhou, 310027, China}
\email{lichuanzhong@sdust.edu.cn, zsliu@itp.ac.cn, bsoul@zju.edu.cn}
\subjclass[2010]{16Z05}
\keywords{ Highest-weight representation, inner product, iterative algorithm, weight, Kapustin-Witten  equations}
\date{}                                         
\begin{document}

\begin{abstract}
  In this paper, we study the inner product of states corresponding to weights  of finite-dimensional  highest-weight representations of classical groups.
  We prove that the  action of the  raising operators would reduce a state of hight-weight representation    to a linear combination of states of highest-weight representation,   with the level decreased by one.
  Then we propose an iterative  algorithm for calculating the inner products of sates efficiently,  revealing  the intricate structure of the representation.

  As applications, we discuss   the  unitarity of the highest-weight representation and  propose a conjecture. We determine the norm of a special class of states.   And we  completely determine the inner products of states of the minuscule representations. The algorithm proposed is   applicable to the highest-weight representation of  affine Lie algebra without modifications. These findings can be used to study the construction of  solutions to  Kapustin-Witten equations which are based on the fundamental solutions of Toda systems.

\end{abstract}
\maketitle

\tableofcontents

\section{Introduction}
Highest-weight representations are familiar to physicists. The representation of the Virasoro algebra  can be used to characterize simple conformal field  theories,  such as minimal modes. The other states of representation can be obtained by successive applications of Virasoro operators on the highest-weight states.  The inner product of states   can be used to characterize  the  unitarity of representations,  revealing  the intricate structure of conformal theories.

 In this paper, we study  the inner product of states of highest-weight representations of simple Lie algebra $\mathfrak{g}$ and its applications. Any finite-dimensional irreducible representation has a unique highest-weight state  $|\Lambda\rangle$.  Starting from the highest-weight state,   all the states in the representation  space  can be obtained by the action of the lowering operators of $\mathfrak{g}$  as follows
$$|\lambda\rangle=E^{-\beta} E^{-\gamma}\cdots E^{-\eta}|\Lambda\rangle \quad\textrm{for} \quad \beta,\gamma,\cdots,\eta\in \Delta_+.$$
The inner product between the state $|\lambda\rangle$ and the conjugate state  $\langle^{'}\lambda|$ of another state $|\lambda^{'}\rangle$ is
\begin{equation}\label{in}
\langle^{'}\lambda|\lambda\rangle=\langle\Lambda^{'}| E^{+\eta^{'}}\cdots E^{+\gamma^{'}} E^{+\beta^{'}}|E^{-\beta} E^{-\gamma}\cdots E^{-\eta}|\Lambda\rangle.
\end{equation}
For any state  $|\lambda\rangle$, the norm of $|\lambda\rangle$ is positive definite \cite{cft}
$$\langle\lambda|\lambda\rangle=\langle\Lambda| E^{+\eta}\cdots E^{+\gamma} E^{+\beta}|E^{-\beta} E^{-\gamma}\cdots E^{-\eta}|\Lambda\rangle >0.$$

The inner product (\ref{in}) appears   in the solutions of  the Kapustin-Witten (\textbf{KW}) equations.
The following inner product is one of the  factors     conjectured in  \cite{vm}
\begin{equation}\label{inpro}
  \langle\upsilon^{i}_{w}|\upsilon^{j}_{w}\rangle=\langle\Lambda_s|E^{+}_{i_1} \cdots E^{+}_{i_n}|E^{-}_{j_n}\cdots E^{-}_{j_1}|\Lambda_s\rangle,
\end{equation}
where $E^{-}_{j_n}\cdots E^{-}_{j_1}$ and $E^{+}_{i_1} \cdots E^{+}_{i_n}$  are two  sequences from the highest-weight $\Lambda_s$ to the  weight $w$.  The  superscript $i$ and $j$ denote the $i$th and $j$th  paths, respectively.
The calculation of the commutation of operators in equation (\ref{inpro}) is tedious and must be done manually for a large number of operators.
In addition, to define the vector $|\upsilon_{w}(\hat{\omega})\rangle$, it need  to consider all the paths   from the highest-weight state to $w$. Each branch node in the diagram increases  the number of paths,  such as the node $(1,-1,1)$ shown in Fig.(\ref{A3-010}).
 As the rank of $\mathfrak{g}$ increases, the number of paths, as well as the number of weights, grows rapidly. More details will be provided in Section \ref{kw}.

\begin{figure}[!ht]
  \begin{center}
    \includegraphics[width=4in, bb=10 10 500 300]{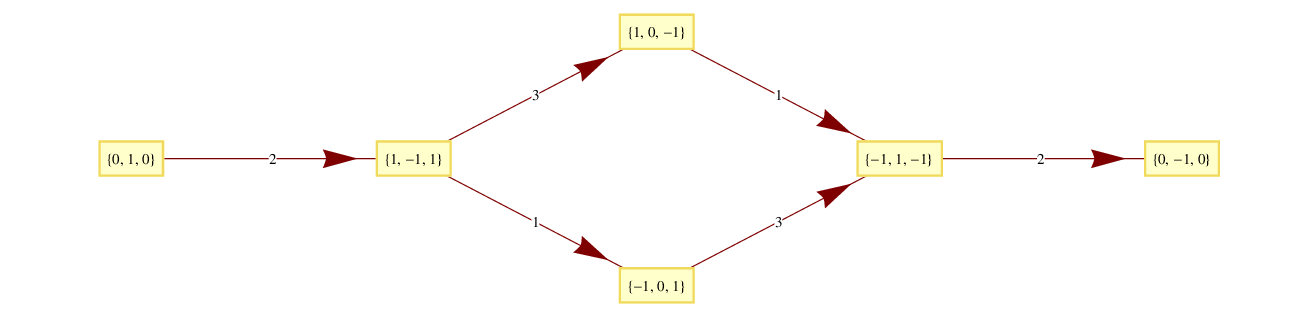}
  \end{center}
  \caption{Weights  in the fundament representation $\rho_2$ of $A_3$.}
  \label{A3-010}
\end{figure}
We illustrate these problems through an example.
 \begin{example}
 Fundamental  representation $\rho_2$ for $A_3$: Let $\hat{\omega}$ be the weight vector  ended.  As shown in Fig.(\ref{A3-010}), there are four paths from the highest-weight $(0,1,0)$ to $(-1,1,-1)$.
To construct the norm of the  state $|\upsilon_{(-1,1,-1)}(\hat{\omega})\rangle$,  we need to compute  the  following two norms
\begin{eqnarray*}
  \langle\upsilon^{1}_{(-1,1,-1)}|\upsilon^{1}_{(-1,1,-1)}\rangle=\langle\Lambda_2|E^{+}_{2} E^{+}_{1}E^{+}_{3}|E^{-}_{3} E^{-}_{1}E^{-}_{2}|\Lambda_2\rangle, \\
  \langle\upsilon^{2}_{(-1,1,-1)}|\upsilon^{2}_{(-1,1,-1)}\rangle=\langle\Lambda_2|E^{+}_{2} E^{+}_{3}E^{+}_{1}|E^{-}_{1} E^{-}_{3}E^{-}_{2}|\Lambda_2\rangle,
\end{eqnarray*}
which correspond to two sequences from the highest-weight $(0,1,0)$ to $(-1,1,-1)$.
We also  need  to compute the  following inner products,
\begin{eqnarray*}
  \langle\upsilon^{1}_{(-1,1,-1)}|\upsilon^{2}_{(-1,1,-1)}\rangle=\langle\Lambda_2|E^{+}_{2} E^{+}_{1}E^{+}_{3}|E^{-}_{1} E^{-}_{3}E^{-}_{2}|\Lambda_2\rangle, \\
 \langle\upsilon^{2}_{(-1,1,-1)}|\upsilon^{1}_{(-1,1,-1)}\rangle=\langle\Lambda_2|E^{+}_{2} E^{+}_{3}E^{+}_{1}|E^{-}_{3} E^{-}_{1}E^{-}_{2}|\Lambda_2\rangle,
\end{eqnarray*}
 where the states and their conjugate states   are constructed by   different paths.
 \end{example}

In this paper, we  focus on the calculation of the inner product (\ref{inpro}). 
In Section 2, we introduce the highest-weight representations as a  preparation.
In Section 3,  we propose an algorithm for calculating the inner product of states of the highest-weight representations,  corresponding to a path in the weight diagram.
The algorithm is an iterative process based on Theorem \ref{ite}, which significantly reduces the computational workload.
It  can be used to calculate inner product  of any irreducible finite-dimensional representations of classical groups in addition to the fundamental representations appearing in the \textbf{KW} equations.

In Section 4, we introduce applications of the algorithm.
We discuss the unitarity of the highest-weight representations and propose one conjecture.  We calculate the norms of  a special kind of states. And we also determine the inner product of states in the minuscule representations.  Finally, the application to  the solution of \textbf{KW} equations is pointed  out.
In the appendix, we give an example to illustrate the proposed algorithm  in detail. By using the proposed algorithm, the inner product of a state can be recorded in less than one page. In contrast, without the algorithm,   it    would take twenty-five pages to record the calculation process.

\section{Preliminary on Highest-Weight Representations}
In the Chevalley basis of a Lie algebra $\mathfrak{g}$, the corresponding raising and lowering operators are denoted by $E_i^{\pm}$
  for a simple root $\alpha_i$, and the operators corresponding to the coroots are denoted by $H_i$. The commutation relations of these operators  are given by
\begin{equation}\label{cn}
 [E_i^{+}, E_{j}^{-}] = \delta_{ji}H_{j}, \quad  [H_{i}, E^{\pm}_{j}] = \pm A_{j,i}E^{\pm}_{j}, \quad [H_i, H_j] =0 .
\end{equation}

The  finite-dimensional irreducible representation $L_{\Lambda}$ has a unique highest-weight state $|\Lambda\rangle$,  satisfying
$$E^{\alpha}|\Lambda\rangle=0,$$
for any positive root $\alpha$,  which is a linear combination of simple roots $\alpha_i$ with positive coefficients.
All states $|\lambda\rangle$ associated to the weight $\lambda$  in the representation space $L_{\Lambda}$ can be obtained by the action
$$|\lambda\rangle=E^{-\beta} E^{-\gamma}\cdots E^{-\eta}|\Lambda\rangle \quad\textrm{for} \quad \beta,\gamma,\cdots,\eta\in \Delta_+,$$
where $\Delta_+$ is the set of positive roots.
The states $|\lambda\rangle$ satisfy the following identity
\begin{equation}\label{ef}
 H_i|\lambda\rangle=\lambda_i|\lambda\rangle.
\end{equation}
The set of eigenvalues of all states in $L_{\Lambda}$ is the weight system $\Omega_{\Lambda}$.
Any weight $\lambda^{'}$ in the set $\Omega_{\Lambda}$ is such that $\lambda-\lambda^{'} \in \Delta_{\Lambda}$.  Consequently, $\lambda$ is necessarily of the form $\Lambda-\sum n_i \alpha_i $, with $n_i\in \mathbb{Z}_+$.
We call $\sum n_i $ the level of the weight $\lambda$ in the representation $\Lambda$.
These weights  can be obtained  by the action of the lowering operators $E_{*}^-$ of $\mathfrak{g}$ as follows
$$|\lambda\rangle=E_{j_{n(w)}}^-\cdots E_{j_1}^-|\Lambda\rangle,$$
whose conjugate state is defined by
 $$\langle\lambda|=\langle\Lambda | E_{j_{1}}^+\cdots E_{j_{n(w)}}^+ |.$$
Then we have
\begin{eqnarray}\label{com}
    H_i|\lambda\rangle &=&H_iE_{j_{n(w)}}^-\cdots E_{j_1}^-|\Lambda\rangle\nonumber\\
   &=& (E_{j_{n(w)}}^-H_i-E_{j_{n(w)}}^-A_{j_{n(w)},i})\cdots E_{j_1}^-|\Lambda\rangle\nonumber \\
   &=&-\sum_{b=1}^{n(w)} E_{j_{n(w)}}^-\cdots A_{j_b, i} E_{j_1}^-\cdots E_{j_1}^-|\Lambda\rangle+E_{j_{n(w)}}^-\cdots E_{j_1}^-H_i|\Lambda\rangle\nonumber\\
   &=& (\Lambda_i-\sum_{b=1}^{n(w)}A_{j_b, i}) E_{j_{n(w)}}^-\cdots E_{j_1}^-|\Lambda\rangle.
\end{eqnarray}
According to formula (\ref{ef}), we have
\begin{equation}\label{ieq}
  \lambda_i=\Lambda_i-\sum_{b=1}^{n(w)}A_{j_b, i}.
\end{equation}

There are  two positive integers $p_i$ and $q_i$, such that
\begin{eqnarray}\label{pq}
  (E^{+}_{i})^{p_i+1}|\lambda\rangle &\sim & E^{+}_i|\lambda+p_i\alpha_i\rangle=0,\\
  (E^{-}_i)^{q_i+1}|\lambda\rangle &\sim & E^{-}_i|\lambda-q_i\alpha_i\rangle=0,
\end{eqnarray}
for any simple root $\alpha_i$. The integers $p_i$ and $q_i$ satisfy the following identity
$$2\frac{(\alpha_i,\lambda)}{|\alpha_i|^2}=(\alpha_i^{\vee},\lambda)=-(p_i-q_i).$$
This identity is crucial for determining all the weights in the weight system $\Omega_{\Lambda}$. By progressing level by level, we determine the value of $p_i$
  at each step. Clearly, $\lambda-\alpha_i$ is also a weight if $q_i$ is nonzero, that is, if $\lambda_i+p_i > 0$. We emphasize the following string of weights corresponding to a weight $\lambda$
\begin{equation}\label{string}
  \lambda+p_i\alpha_i,\cdots,\lambda,\cdots,\lambda-q_i\alpha_i.
\end{equation}
Note that the weight $\lambda+p_i\alpha_i$ satisfies formula (\ref{pq}), which will play a significant role in the algorithm proposed in Section 3.

The procedure for constructing  all the weights in the representation is  as follows. We start with the highest-weight $\Lambda=(\Lambda_1, \cdots,\Lambda_r)$:
\begin{enumerate}
   \item  Construct the sequence of weights $\Lambda-\alpha_i, \Lambda-2\alpha_i, \cdots, \Lambda-\Lambda_i\alpha_i$.
   \item  Repeat this process  with $\Lambda$ replaced by each of the weights just obtained.
   \item  Continue this process until no more weights with positive  Dynkin labels are produced.
 \end{enumerate}
 Simple examples will clarify the method.
 \begin{example}
     Consider the fundamental representation $\rho_2$ of $A_3$, characterized by the highest weight $(0,1,0)$, as illustrated in Fig. (\ref{A3-010}). The weights at each step can be derived using the aforementioned procedure.
 \end{example}
\begin{flushleft}
Another example is the fundamental representation $\rho_2$
  of $G_2$, whose weights can be read from Fig.(\ref{G2}) in Section 3.
\end{flushleft}

\section{Inner Product}
In this section,    we first  prove that  the inner product of the states defined by   different weights is zero.
And  then, we examine  the inner product of  states defined by  the same weight but with different paths from the highest weight to the terminating weight    in the weight diagram.

In Theorem  \ref{ite}, we prove that the  action of the  raising operators would reduce a state of hight-weight representation    to a linear combination of states of highest-weight representation,   with the level decreased by one.
Based on this crucial result, we propose an iterative algorithm   for calculating the inner product efficiently, which is the main result of this paper.

\subsection{Inner Product of  Different States}
Let $|\upsilon^{m}_{w}\rangle$ denote a state of level $m$  along a path   in the weight diagram ended with the  weight $w$.
And  $|\upsilon^{m,n}_{w}\rangle$ denote a state of level $m$  along the $n$th path from the highest-weight vector to the weight $w$.

First, we derive a useful identity.
According to the commutation  relations (\ref{cn}), for the highest-weight $\Lambda=\Sigma_a \lambda_a \omega_a$,  we have the following  identity,
 \begin{eqnarray}\label{le0}
  E_{a}^{+}E_{j_n}^{-} E_{j_{n-1}}^{-}\cdots E_{j_1}^{-} |\Lambda\rangle &=& (\delta_{a,j_n}H_{a}+E_{j_n}^{-}E_{a}^{+} )E_{j_{n-1}}^{-}\cdots E_{j_1}^{-} |\Lambda\rangle \nonumber\\
    &=&\sum_{i=1}^{n}E_{j_n}^{-} E_{j_{n-1}}^{-}\cdots E_{j_{i+1}}^{-}\delta_{a,j_i}H_{a} \hat{E}_{j_{i}}^{-}E_{j_{i-1}}^{-}\cdots E_{j_1}^{-}  |\Lambda\rangle\\
    &=& \sum_{i=1}^{n}\delta_{a,j_i}(\lambda_a-(\sum_{i=1}^{i-1}A_{j_l,a}))E_{j_n}^{-} E_{j_{n-1}}^{-}\cdots \hat{E}_{j_{i}}^{-}E_{j_{i-1}}^{-}\cdots E_{j_1}^{-} |\Lambda\rangle,\nonumber
 \end{eqnarray}
where the hat indicates the omission of the corresponding term.  We have used formula (\ref{com}) in the last equality. We sequentially communicate  $E^{+}_{a}$   with the lowering operators  $E^{-}_{*}$  from right to left.  Finally,  the remaining operator $E^{+}_{a}$ annihilate the highest-weight state $|\Lambda\rangle$.

The following theorem imply that the level $m$ is a quantum number, which means inner product of  states with different level are zero.
\begin{theorem}\label{ii1}
 The inner product  of the states with different level is zero.
\end{theorem}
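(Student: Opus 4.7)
The plan is to observe that each state $|\upsilon^{m}_{w}\rangle = E^-_{j_m}\cdots E^-_{j_1}|\Lambda\rangle$ is a simultaneous eigenvector of the Cartan operators $H_i$, and then to invoke the fact that weight vectors for distinct weights are orthogonal with respect to the pairing introduced in Section~2.

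First, I would verify by induction on $m$, using $[H_i, E^-_j] = -A_{ji} E^-_j$ from (\ref{cn}) together with $H_i|\Lambda\rangle = \Lambda_i|\Lambda\rangle$, that $|\upsilon^{m}_{w}\rangle$ is a weight vector with weight $w = \Lambda - \alpha_{j_1} - \cdots - \alpha_{j_m}$. Because the simple roots of $\mathfrak{g}$ are linearly independent, the level $\sum n_i$ of a weight $\lambda = \Lambda - \sum n_i \alpha_i$ is uniquely determined by $\lambda$ itself. Consequently, two states belonging to different levels lie in different weight spaces.

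Next I would exploit the adjointness $(E^-_i)^{\dagger} = E^+_i$ built into the definition of the conjugate state $\langle\lambda| = \langle\Lambda|E^+_{j_1}\cdots E^+_{j_m}$, which forces $H_i^{\dagger} = [E^+_i, E^-_i]^{\dagger} = [E^+_i, E^-_i] = H_i$. Given two states $|\upsilon^{m}_{\mu}\rangle$ and $|\upsilon^{n}_{\lambda}\rangle$, I would then compute the matrix element $\langle\upsilon^{m}_{\mu}|H_i|\upsilon^{n}_{\lambda}\rangle$ in two ways: acting to the right with (\ref{ef}) yields $\lambda_i \langle\upsilon^{m}_{\mu}|\upsilon^{n}_{\lambda}\rangle$, while acting to the left yields $\mu_i \langle\upsilon^{m}_{\mu}|\upsilon^{n}_{\lambda}\rangle$. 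Comparing the two expressions gives $(\lambda_i - \mu_i)\,\langle\upsilon^{m}_{\mu}|\upsilon^{n}_{\lambda}\rangle = 0$ for every $i$. Whenever $m \neq n$ one has $\lambda \neq \mu$, so at least one component $\lambda_i - \mu_i$ is nonzero, and the inner product must vanish.

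The proof is largely mechanical; the only step that deserves a brief check is the self-adjointness of $H_i$ under the pairing, but this is immediate from the adjointness $(E^{\pm}_i)^{\dagger} = E^{\mp}_i$ together with $[E^+_i, E^-_i] = H_i$ in (\ref{cn}). No ingredient beyond the material of Section~2 is required, and the argument applies uniformly to all irreducible finite-dimensional representations of the classical groups considered in the paper.
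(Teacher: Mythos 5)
Your proof is correct, but it takes a genuinely different route from the paper's. The paper argues by brute-force commutation: with $m>n$, it pushes each raising operator $E^{+}_{i_k}$ through the string of lowering operators using $[E^{+}_i,E^{-}_j]=\delta_{ij}H_j$ from (\ref{cn}); each such pass consumes at most one $E^{-}$, so after $n$ passes the surplus $m-n$ raising operators land directly on $|\Lambda\rangle$ and annihilate it. You instead observe that each state is an $H_i$-eigenvector of weight $\Lambda-\sum\alpha_{j_k}$, that linear independence of the simple roots makes the level a function of the weight (so different levels force different weights), and that $H_i$ is self-adjoint for the contravariant pairing because $H_i=[E^{+}_i,E^{-}_i]$ and $(E^{\pm}_i)^{\dagger}=E^{\mp}_i$; evaluating $\langle\mu|H_i|\lambda\rangle$ two ways then gives $(\lambda_i-\mu_i)\langle\mu|\lambda\rangle=0$, and since the coroots $H_i$ span the Cartan subalgebra, $\lambda\neq\mu$ forces some $\lambda_i\neq\mu_i$. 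Both arguments are sound. What your approach buys is generality and economy: it is the standard weight-space orthogonality argument for a contravariant (Shapovalov-type) form, and it proves at one stroke that states of \emph{any} two distinct weights are orthogonal, which also subsumes the paper's Theorem \ref{ii2} (whose proof there likewise reduces to the case of states with different operator content, hence different weights). What it does not do, and what the paper's computational style is setting up, is provide the explicit commutation machinery that the subsequent recursion algorithm (Theorem \ref{ite} and formula (\ref{prop})) is built on; the paper's proof of Theorem \ref{ii1} is essentially a warm-up for that calculus.
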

\begin{proof}
Assume the states $|\upsilon^{m}\rangle$ and $|\upsilon^{n}\rangle$ with levels $m$ and $n$   are given by
$$|\upsilon^{m}\rangle=|E^{-}_{i_m}\cdots E^{-}_{i_1}|\Lambda\rangle, \quad\quad |\upsilon^{n}\rangle=|E^{-}_{j_n}\cdots E^{-}_{j_1}|\Lambda\rangle.$$
Without loss of generality,  assume the level $m>n$. Then the  inner product of $|\upsilon^{m}\rangle$ and $|\upsilon^{n}\rangle$ is
\begin{eqnarray*}
  \langle\upsilon^{m}|\upsilon^{n}\rangle &=& \langle\Lambda|E^{+}_{i_1} \cdots E^{+}_{i_m}|E^{-}_{j_n}\cdots E^{-}_{j_1}|\Lambda\rangle \\
  &=& \langle\Lambda|E^{+}_{i_1} \cdots E^{+}_{i_{m-1}} (\delta_{i_m,j_n}H_{i_m}+ E^{-}_{j_n}E^{+}_{i_m})E^{-}_{j_{n-1}}\cdots E^{-}_{j_1}|\Lambda\rangle \\
   &=& \langle\Lambda|E^{+}_{i_1} \cdots E^{+}_{i_{m-1}}  (\sum_{k=1}^n E^{-}_{j_n}\cdots E^{-}_{j_{k+1}}\delta_{i_m,j_n}H_{i_m}E^{-}_{j_{k-1}} \cdots E^{-}_{j_1})|\Lambda\rangle.
\end{eqnarray*}
According to formula (\ref{le0}), the operator $\delta_{i_m,j_n}H_{i_m}$ can be seen as an undetermined constant $h_{i_m,j_k}$ because it is the eigenvalue of the operator  $\delta_{i_m,j_n}H_{i_m}$ acting  on the state $E^{-}_{j_{k-1}} \cdots E^{-}_{j_1}|\Lambda\rangle$.
The action of the operator $E^{+}_{*}$   decrease the number of the operators $E^{-}_{*}$ of $|\upsilon_{n}\rangle$ by one. Then there is no $E^{-}_{*}$ left after $n$ times of the   actions of the operators $E^{+}_{*}$. And the remaining  operators $E^{+}_{i_1} \cdots E^{+}_{i_{m-n}}$    annihilate $|\Lambda\rangle$. Thus, we have
\begin{eqnarray*}
   \langle\upsilon^{m}|\upsilon^{n}\rangle  &=& \langle\Lambda|E^{+}_{i_1} \cdots E^{+}_{i_{m-1}}(\sum_{k=1}^n E^{-}_{j_n}\cdots E^{-}_{j_{k+1}}h_{i_m,j_k}E^{-}_{j_{k-1}} \cdots E^{-}_{j_1})|\Lambda\rangle\\
&=& \langle\Lambda|E^{+}_{i_1} \cdots E^{+}_{i_{m-n}}(\sum h_{\mathbf{i}, \mathbf{j}}\delta_{\mathbf{i}, \mathbf{j}})|\Lambda\rangle.
\end{eqnarray*}
We draw the conclusion.
\end{proof}

Before examining the inner product of different states at the same level, we present the following lemma.
\begin{lemma}\label{quale}
If the states are obtained  by   the same contents of  operators $E^{-}_{*}$ acting on the highest-weight with different orders, the  weights corresponding to the states  are the same.
\end{lemma}
\begin{proof}
Let two states be generated by the operators $E^{-}_{1}, \cdots, E^{-}_{n}$ acting on the highest weight $\Lambda$ in different sequences. The weights of these two states are the same due to the equality
 $$w^1=w^2=\Lambda-\alpha_1-\cdots-\alpha_n.$$
\end{proof}

We then state the following theorem.
\begin{theorem}\label{ii2}
The inner product  of different states with the same level is zero.
\end{theorem}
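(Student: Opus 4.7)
The plan is to mirror the strategy used in Theorem \ref{ii1}, but pair it with the observation that, at a fixed level, ``different states'' here must mean weight vectors sitting in distinct weight spaces. Indeed, two distinct sequences of lowering operators that both reach the same weight are precisely the subject of the next subsection, so the content of this theorem must be that whenever $|\upsilon^m_{w_1}\rangle = E^-_{i_m}\cdots E^-_{i_1}|\Lambda\rangle$ and $|\upsilon^m_{w_2}\rangle = E^-_{j_m}\cdots E^-_{j_1}|\Lambda\rangle$ have equal level but different weights $w_1\neq w_2$, their inner product vanishes.

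The cleanest route is to exploit that each such state is a simultaneous eigenvector of the Cartan generators $H_i$, i.e.\ $H_i|\upsilon^m_w\rangle = w_i|\upsilon^m_w\rangle$ by (\ref{ef}), and that $H_i$ is self-adjoint with respect to the contragredient form in which $(E^+_i)^\ast = E^-_i$. This last fact is immediate from (\ref{cn}): since $H_j = [E^+_j,E^-_j]$, one has $H_j^\ast = [(E^-_j)^\ast,(E^+_j)^\ast] = [E^+_j,E^-_j] = H_j$. Then
$$(w_1)_i \,\langle \upsilon^m_{w_1} | \upsilon^m_{w_2}\rangle = \langle H_i\upsilon^m_{w_1} | \upsilon^m_{w_2}\rangle = \langle \upsilon^m_{w_1} | H_i\upsilon^m_{w_2}\rangle = (w_2)_i \,\langle \upsilon^m_{w_1} | \upsilon^m_{w_2}\rangle,$$
and choosing any index $i$ with $(w_1)_i\neq (w_2)_i$ forces the inner product to be zero.

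Alternatively, and closer in spirit to Theorem \ref{ii1}, one can induct on the common level $m$. The case $m=1$ reduces to $\langle\Lambda|E^+_{i_1}E^-_{j_1}|\Lambda\rangle = \delta_{i_1,j_1}\Lambda_{j_1}$, which vanishes when $i_1\neq j_1$. For the inductive step, commute $E^+_{i_m}$ rightward through $E^-_{j_m}\cdots E^-_{j_1}$ exactly as in the proof of Theorem \ref{ii1}; the surviving contributions are indexed by positions $k$ with $j_k=i_m$, where $H_{j_k}$ is replaced by its eigenvalue on the substate $E^-_{j_{k-1}}\cdots E^-_{j_1}|\Lambda\rangle$, times the shorter bracket obtained by removing $E^+_{i_m}$ on the left and $E^-_{j_k}$ on the right. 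These shorter states have weights $w_1+\alpha_{i_m}$ and $w_2+\alpha_{i_m}$, which remain distinct since $w_1\neq w_2$, and they have level $m-1$, so the inductive hypothesis annihilates each term.

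The main obstacle is not algebraic but interpretive: one must pin down at the outset that ``different states with the same level'' refers to states in different weight spaces, since otherwise the theorem would contradict the nontrivial inner products computed in Subsection 3.2 between two paths to the same weight. Once that is fixed, the orthogonality of distinct weight spaces under the Shapovalov-type form---whether derived from the self-adjointness of $H_i$ or, equivalently, from the inductive commutation argument above---does all the work.
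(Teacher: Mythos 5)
Your proposal is correct, and your interpretive remark is exactly right: the paper's proof opens by arguing that ``different states with the same level'' forces the two multisets of lowering operators, and hence the two weights, to differ, so the theorem is indeed a statement about states in distinct weight spaces. Your second (inductive) argument is essentially the paper's route made cleaner: the paper commutes $E^+_{i_n}$ through the string of lowering operators and asserts that every surviving term carries a product of Kronecker deltas $\delta_{\mathbf{i},\mathbf{j}}$ that cannot all be satisfied when the index multisets differ; your version organizes the same commutation as an induction on the level, tracking that the truncated bra and ket states have weights $w_1+\alpha_{i_m}$ and $w_2+\alpha_{i_m}$, still distinct, which makes the vanishing of each term an honest appeal to the inductive hypothesis rather than an assertion. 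Your first argument is genuinely different and more economical: it derives $H_i^\ast=H_i$ from the convention $(E^-_i)^\ast=E^+_i$ implicit in the paper's definition of the conjugate state, and then gets orthogonality of distinct weight spaces in one line from $(w_1)_i\langle\cdot|\cdot\rangle=(w_2)_i\langle\cdot|\cdot\rangle$. This is the standard Shapovalov-form argument; it buys brevity and avoids the bookkeeping of the commutation expansion entirely, at the cost of introducing the adjointness structure explicitly, which the paper never states but uses throughout. Either of your two arguments suffices; the inductive one is closer to the spirit of the recursion algorithm the paper is building toward.
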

\begin{proof}
 $|\upsilon^{n,1}_{w^1}\rangle$ and $|\upsilon^{n,2}_{w^2}\rangle$ are two different states with the same level $n$ and $w^1 \neq w^2$.
According to Lemma \ref{quale},  they  contain the same number of  operators $E^{-}_{*}$ but  have different contents.   The inner product is,
\begin{eqnarray*}
   && \langle\Lambda|E^{+}_{i_1} \cdots E^{+}_{i_n}|E^{-}_{j_n}\cdots E^{-}_{j_1}|\Lambda\rangle \\
   &=& \langle\Lambda|E^{+}_{i_1} \cdots E^{+}_{i_{n-1}} (H_{{i_n,j_n}}\delta_{i_n,j_n}+ E^{-}_{j_n}E^{+}_{i_n})E^{-}_{j_{n-1}}\cdots E^{-}_{j_1}|\Lambda\rangle \\
   &=& \langle\Lambda|E^{+}_{i_1} \cdots E^{+}_{i_{n-1}}  (\sum_{k=1}^n E^{-}_{j_n}\cdots E^{-}_{j_{k+1}}H_{i_k,j_k}\delta_{i_k,j_k}E^{-}_{j_{k-1}} \cdots E^{-}_{j_1})|\Lambda\rangle \\
      &=& \langle\Lambda|E^{+}_{i_1} \cdots E^{+}_{i_{n-1}}  (\sum_{k=1}^n E^{-}_{j_n}\cdots E^{-}_{j_{k+1}}h_{i_k,j_k}\delta_{i_k,j_k}E^{-}_{j_{k-1}} \cdots E^{-}_{j_1})|\Lambda\rangle \\
   &=& \sum h_{\mathbf{i}, \mathbf{j}}\delta_{\mathbf{i}, \mathbf{j}}\\
   &=& 0.
\end{eqnarray*}
The last equality is zero because each term of the fourth equality vanish.
\end{proof}

\subsection{Inner Product of the Same State}

In this section, we examine the inner product of states defined by different paths, which terminate with the same weight in the weight diagram.
Considering a special case of identity (\ref{le0}),
\begin{eqnarray}\label{le}
  E_{i}^{+}(E_{i}^{-})^n |\Lambda\rangle
    &=&(H_i+ E_{i}^{-}E_{i}^{+})(E_{i}^{-})^{n-1} |\Lambda\rangle\nonumber\\
  &=& \sum_{i=1}^{n-1} (E_{i}^{-})^{l} H_i (E_{i}^{-})^{n-1-l} |\Lambda\rangle\nonumber \\
    &=&\sum_{i=1}^{n-1}(E_{i}^{-})^{l}(\lambda_i-(n-1-l)A_{ii})(E_{i}^{-})^{n-1-l}|\Lambda\rangle \nonumber\\
       &=&n(\lambda_i-(n-1))(E_{i}^{-})^{n-1}|\Lambda\rangle\nonumber \\
       &=&R.H.S
\end{eqnarray}
where we have used formula (\ref{com}) in the third step.

Using the above formula, we can further  generalize  identity  (\ref{le0}).
\begin{prop}
For the highest-weight $\Lambda=\Sigma_a \lambda_a \omega_a$, we have
\begin{equation*}\label{pro}
  E_{i}^{+}(E_{i}^{-})^n \prod_{b=1}^{m}E_{j_b}^{-} |\Lambda\rangle
  =n(\lambda_i-(n-1)-\sum_{b=1}^{m}A_{j_b,i}) (E_{i}^{-})^{n-1} \prod_{b=1}^{m}E_{j_b}^{-}|\Lambda\rangle+ (E_{i}^{-})^{n} E_{i}^{+}\prod_{b=1}^{m}E_{j_b}^{-}|\Lambda\rangle.
\end{equation*}
\end{prop}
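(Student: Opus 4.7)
The plan is to extend the computation behind identity (\ref{le}), which is precisely the special case $m=0$. First I would commute $E_i^+$ past all $n$ copies of $E_i^-$ using $[E_i^+,E_i^-]=H_i$ from (\ref{cn}). A one-line induction on $n$ produces the standard commutator expansion
$$E_i^+(E_i^-)^n \;=\; (E_i^-)^n E_i^+ \;+\; \sum_{l=0}^{n-1}(E_i^-)^l\, H_i\, (E_i^-)^{n-1-l}.$$
Applying both sides to $\prod_{b=1}^{m} E_{j_b}^-|\Lambda\rangle$, the $(E_i^-)^n E_i^+$ piece immediately reproduces the last summand on the right-hand side of the proposition.

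Next I would evaluate the remaining sum. For each $l$, the intermediate state $(E_i^-)^{n-1-l}\prod_{b=1}^{m} E_{j_b}^-|\Lambda\rangle$ is a weight vector of weight $\Lambda-(n-1-l)\alpha_i-\sum_{b=1}^{m}\alpha_{j_b}$, so by (\ref{ef}) the Cartan element $H_i$ acts on it by the scalar
$$\lambda_i \;-\; (n-1-l)A_{ii} \;-\; \sum_{b=1}^{m} A_{j_b,i} \;=\; \lambda_i \;-\; 2(n-1-l) \;-\; \sum_{b=1}^{m} A_{j_b,i},$$
using $A_{ii}=2$. Because this is a scalar it commutes through the leftover $(E_i^-)^l$ on the left, so every term in the sum collapses onto the same common vector $(E_i^-)^{n-1}\prod_{b=1}^{m} E_{j_b}^-|\Lambda\rangle$.

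Summing the resulting arithmetic series yields the coefficient
$$\sum_{l=0}^{n-1}\!\Bigl[\lambda_i - 2(n-1-l) - \sum_{b=1}^{m} A_{j_b,i}\Bigr] \;=\; n\Bigl[\lambda_i - \sum_{b=1}^{m} A_{j_b,i}\Bigr] - n(n-1) \;=\; n\Bigl[\lambda_i-(n-1)-\sum_{b=1}^{m} A_{j_b,i}\Bigr],$$
which matches the coefficient claimed. There is no serious obstacle beyond careful bookkeeping; the only subtlety is recognizing that the $H_i$-eigenvalue of each intermediate state receives two separate contributions, one from the remaining $E_i^-$ factors on its right (which after summation produces the $-(n-1)$ shift) and one from the $E_{j_b}^-$ factors (which produces the $-\sum_{b} A_{j_b,i}$ shift). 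This is exactly the generalization of the $m=0$ bookkeeping already carried out in (\ref{le}).
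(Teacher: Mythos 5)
Your proposal is correct and follows essentially the same route as the paper's own proof: expand $E_i^+(E_i^-)^n$ via the commutator identity into $(E_i^-)^nE_i^+$ plus $\sum_{l=0}^{n-1}(E_i^-)^l H_i (E_i^-)^{n-1-l}$, evaluate $H_i$ on each intermediate weight vector to get the scalar $\lambda_i-(n-1-l)A_{ii}-\sum_b A_{j_b,i}$, and sum the arithmetic series using $A_{ii}=2$. Your write-up is in fact slightly more explicit than the paper's about where the $-(n-1)$ and $-\sum_b A_{j_b,i}$ shifts come from, but the decomposition and bookkeeping are identical.
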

\begin{proof}: According to   Eq.(\ref{le}), we have
\begin{eqnarray*}
  L.H.S&=& \sum_{a=0}^{n-1} (E_{i}^{-})^{a} H_i (E_{i}^{-})^{n-1-a} \prod_{b=1}^{m}E_{j_b}^{-}|\Lambda\rangle+ (E_{i}^{-})^{n} E_{i}^{+}\prod_{b=1}^{m}E_{j_b}^{-}|\Lambda\rangle\\
    &=&\sum_{a=0}^{n-1} (E_{i}^{-})^{a} (\lambda_i-(n-1-a)A_{ii}-\sum_{b=1}^{m}A_{j_b,i}) (E_{i}^{-})^{n-1-a} \prod_{b=1}^{m}E_{j_b}^{-}|\Lambda\rangle\\
    &&\quad \quad \quad\quad + (E_{i}^{-})^{n} E_{i}^{+}\prod_{b=1}^{m}E_{j_b}^{-}|\Lambda\rangle \\
       &=&n(\lambda_i-(n-1)-\sum_{b=1}^{m}A_{j_b,i}) (E_{i}^{-})^{n-1} \prod_{b=1}^{m}E_{j_b}^{-}|\Lambda\rangle+ (E_{i}^{-})^{n} E_{i}^{+}\prod_{b=1}^{m}E_{j_b}^{-}|\Lambda\rangle.
\end{eqnarray*}
\end{proof}
For $n=0$, this formula reduce to Eq.(\ref{le0}). For $m=0$, we recover Eq.(\ref{le}). Using this formula, we can calculate any inner product directly; however, the computational efficiency is not acceptable.
With the notation $ |\Lambda^1\rangle=\prod_{b=1}^{m}E_{j_b}^{-} |\Lambda\rangle$,  the identity in Proposition \ref{pro} becomes
\begin{equation}\label{bfi}
  E_{i}^{+}(E_{i}^{-})^n \prod_{b=1}^{m}E_{j_b}^{-} |\Lambda\rangle
  =n(\Lambda^1-(n-1)) (E_{i}^{-})^{n-1} |\Lambda^1\rangle+ (E_{i}^{-})^{n} E_{i}^{+}\mid_{|\Lambda^1\rangle}\prod_{b=1}^{m}E_{j_b}^{-}|\Lambda\rangle.
\end{equation}
The subscript ${}_{|\Lambda^1\rangle}$ denotes the  state defined by  the operators on the right  acting  on  $|\Lambda\rangle$
$$|\Lambda^1\rangle=\prod_{b=1}^{m}E_{j_b}^{-}|\Lambda\rangle.$$

The following formula is the basis of our main result,  which is a   generalization of  the identities  (\ref{bfi}).
\begin{prop}
Let
\begin{equation}\label{lambb}
|\lambda\rangle=\prod_{b=1}^{m_1}E_{j_b}^{-} (E_{i}^{-})^{n_1}{}_{|\Lambda^1\rangle} \prod_{b=1}^{m_2}E_{j_b}^{-}(E_{i}^{-})^{n_2} {}_{|\Lambda^2\rangle} \cdots \prod_{b=1}^{m_{l}}E_{j_b}^{-}(E_{i}^{-})^{n_l} {}_{|\Lambda^l\rangle} \prod_{b=1}^{m_{l+1}}E_{j_b}^{-}|\Lambda\rangle
\end{equation}
 with  the highest-weight is given by  $\Lambda=\Sigma_a \Lambda_a \omega_a$. And $(E_{i}^{-})^{n_1},\cdots, (E_{i}^{-})^{n_l}$ are  the product factors related to the operator $E_{i}^{-}$.
Then  we have
\begin{eqnarray}\label{prop}
&&E_{i}^{+}\prod_{b=1}^{m_1}E_{j_b}^{-} (E_{i}^{-})^{n_1}{}_{|\Lambda^1\rangle} \prod_{b=1}^{m_2}E_{j_b}^{-}(E_{i}^{-})^{n_2} {}_{|\Lambda^2\rangle} \cdots \prod_{b=1}^{m_{l}}E_{j_b}^{-}(E_{i}^{-})^{n_l} {}_{|\Lambda^l\rangle} \prod_{b=1}^{m_{l+1}}E_{j_b}^{-}|\Lambda\rangle  \\
  =&&\sum_{k=1}^l\prod_{b=1}^{m_1}E_{j_b}^{-} (E_{i}^{-})^{n_1} \cdots n_k(\Lambda^k_i-(n_k-1))\prod_{b=1}^{m_k}E_{j_b}^{-}(E_{i}^{-})^{n_k-1}  \cdots \prod_{b=1}^{m_{l}}E_{j_b}^{-}(E_{i}^{-})^{n_l} \prod_{b=1}^{m_{l+1}}E_{j_b}^{-}|\Lambda\rangle,\nonumber
\end{eqnarray}
where $i\neq j_b$.
\end{prop}

\begin{proof}
According to Eq.(\ref{cn}), we have $[E_{i}^{+}, E_{j_b}^{-}]=0$ for $i\neq j_b$. According to   Eq.(\ref{le}), we have
\begin{eqnarray*}
  L.H.S&=& E_{i}^{+}\prod_{b=1}^{m_1}E_{j_b}^{-} (E_{i}^{-})^{n_1}{|\Lambda^1\rangle}\\
       &=& n_1(\Lambda^1_i-(n_1-1))\prod_{b=1}^{m_1}E_{j_b}^{-} (E_{i}^{-})^{n_1-1}{|\Lambda^1\rangle}+\prod_{b=1}^{m_1}E_{j_b}^{-} (E_{i}^{-})^{n_1}E_{i}^{+}{|\Lambda^1\rangle}\nonumber\\
        &=& n_1(\Lambda^1_i-(n_1-1))\prod_{b=1}^{m_1}E_{j_b}^{-} (E_{i}^{-})^{n_1-1}{|\Lambda^1\rangle}+\cdots\nonumber\\
        &&+\prod_{b=1}^{m_1}E_{j_b}^{-} (E_{i}^{-})^{n_1} \cdots n_k(\Lambda^k_i-(n_k-1))\prod_{b=1}^{m_k}E_{j_b}^{-}(E_{i}^{-})^{n_k-1} {|\Lambda^k\rangle}\\
        &&+\prod_{b=1}^{m_1}E_{j_b}^{-} (E_{i}^{-})^{n_1} \cdots \prod_{b=1}^{m_k}E_{j_b}^{-}(E_{i}^{-})^{n_k}E_{i}^{+} {|\Lambda^k\rangle}\nonumber\\
       &=& \cdots\\
        &=& R.H.S.
\end{eqnarray*}
\end{proof}

Fortunately, there are not always numerous terms  on the right hand side of the identity (\ref{prop}).
Let $|\Lambda^t\rangle$ be the first state satisfying   $E_{i}^{+} {|\Lambda^t\rangle}=0$  in  $\lambda$ (\ref{lambb}). Consequently,  the sequences will terminate at the $t$th factor containing $E_{i}^{-}$.  And then we have
\begin{eqnarray}\label{propp}
&&E_{i}^{+}\prod_{b=1}^{m_1}E_{j_b}^{-} (E_{i}^{-})^{n_1}{}_{|\Lambda^1\rangle} \prod_{b=1}^{m_2}E_{j_b}^{-}(E_{i}^{-})^{n_2} {}_{|\Lambda^2\rangle} \cdots \prod_{b=1}^{m_{l}}E_{j_b}^{-}(E_{i}^{-})^{n_l} {}_{|\Lambda^l\rangle} \prod_{b=1}^{m_{l+1}}E_{j_b}^{-}|\Lambda\rangle  \\
  =&&\sum_{k=1}^t\prod_{b=1}^{m_1}E_{j_b}^{-} (E_{i}^{-})^{n_1} \cdots n_k(\Lambda^k_i-(n_k-1))\prod_{b=1}^{m_k}E_{j_b}^{-}(E_{i}^{-})^{n_k-1}  \cdots \prod_{b=1}^{m_{t}}E_{j_b}^{-}(E_{i}^{-})^{n_t}|\Lambda^t\rangle.\nonumber
\end{eqnarray}

\begin{figure}[!ht]
  \begin{center}
    \includegraphics[width=6.8in]{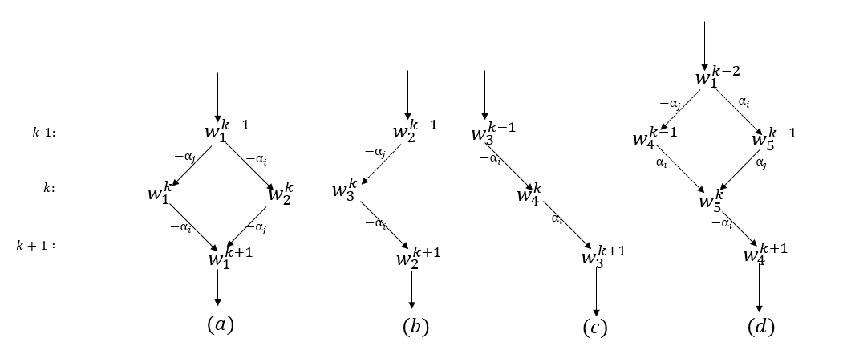}
  \end{center}
  \caption{ Weights of different levels  formed by two positive roots $\alpha_i$ and $\alpha_j$. }
  \label{t2}
\end{figure}
Moreover, we find that the terms on the right hand side of the identity (\ref{propp})   correspond  to  paths in the   weight diagram.
\begin{theorem}\label{ite}
The  action of the  $E^{+}_*$ would reduce a state corresponding  to a path in the   weight diagram  to states  corresponding  to paths in the  same  weight diagram.
\end{theorem}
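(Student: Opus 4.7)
The plan is to derive Theorem \ref{ite} directly from the identity (\ref{propp}), by interpreting the right-hand side of that identity as a sum of path-states rather than an arbitrary algebraic expression.

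First, given any path-state $|\upsilon\rangle = E^-_{j_s}\cdots E^-_{j_1}|\Lambda\rangle$ in the representation $L_\lambda$ and any fixed raising operator $E^+_i$, I would start by re-grouping the word $E^-_{j_s}\cdots E^-_{j_1}$ so that maximal consecutive runs of $E^-_i$ are collected into powers $(E^-_i)^{n_k}$. This places $|\upsilon\rangle$ exactly into the canonical shape appearing on the left-hand side of (\ref{propp}), with the intermediate states $|\Lambda^1\rangle, \ldots, |\Lambda^l\rangle$ being the path-states naturally sitting between successive $E^-_i$-blocks.

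Second, I would apply (\ref{propp}), or its truncated form when one of the intermediate states is annihilated by $E^+_i$. The right-hand side is a finite sum of terms, each of which is obtained from the original word by deleting one $E^-_i$ from the $k$th run and multiplying by the scalar $n_k(\Lambda^k_i - (n_k-1))$, which comes directly from the commutator evaluation $[E^+_i,(E^-_i)^{n_k}]$ already carried out in Eq.(\ref{le}). By construction each summand is again a scalar multiple of a product of lowering operators applied to $|\Lambda\rangle$, and is therefore of the same syntactic form as the original path-state $|\upsilon\rangle$ but with total level one lower and terminal weight shifted by $+\alpha_i$.

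Third, I would observe that, by the algorithm of Section 2, every nonzero product of $E^-_*$'s applied to $|\Lambda\rangle$ automatically corresponds to a path in the weight diagram, so the summands on the right-hand side, when nonzero, are precisely path-states ending at the weight $w+\alpha_i$, where $w$ is the weight of $|\upsilon\rangle$. This is exactly the statement of Theorem \ref{ite}.

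The main obstacle, and essentially the only non-bookkeeping step, is confirming that each summand's intermediate weights genuinely lie in $\Omega_\Lambda$ once one $E^-_i$ has been stripped from the $k$th block; the prefix up to block $k-1$ is unchanged, and inside the $k$th block the shortened $(E^-_i)^{n_k-1}$ is a valid initial segment of the $\alpha_i$-string through $|\Lambda^k\rangle$ by (\ref{string}). For the suffix one notes that non-vanishing of $n_k(\Lambda^k_i-(n_k-1))$ forces $\Lambda^k_i\geq n_k$, which together with the commutation $[E^+_i,E^-_j]=0$ for $j\neq i$ from (\ref{cn}) propagates the string availability through each subsequent $E^-_{j_b}$, so the uniformly $+\alpha_i$-shifted suffix weights are all members of $\Omega_\Lambda$. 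Once this verification is settled, the theorem becomes a faithful re-packaging of identity (\ref{propp}) in the language of paths, which then justifies the recursive use of the formula promised before the theorem.
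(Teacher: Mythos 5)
Your argument is correct, but it takes a genuinely different route from the paper's. The paper proves Theorem \ref{ite} by induction on the level of the weight, with a case analysis on the local shape of the weight diagram around the level-$(k+1)$ weight (the configurations (a)--(d) of Fig.~(\ref{t2})), repeatedly expanding $E^{+}_iE^{-}_i=H_i+E^{-}_iE^{+}_i$ and invoking the inductive hypothesis to re-expand $E^{+}_i$ applied to lower-level states. You bypass the induction entirely: you put the word of lowering operators into the block form of (\ref{prop}) by collecting maximal runs of $E^{-}_i$, apply that identity once, and read off that every summand is the original word with a single $E^{-}_i$ deleted from one run, multiplied by the scalar $n_k(\Lambda^k_i-(n_k-1))$; since a nonzero word of lowering operators applied to $|\Lambda\rangle$ has all intermediate vectors nonzero, all intermediate weights lie in $\Omega_{\Lambda}$, and (weight strings being unbroken) consecutive ones are joined by arrows of the diagram, so each nonvanishing summand is a path-state ending at $w+\alpha_i$. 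Your version is shorter, identifies the resulting paths and their coefficients explicitly, and makes the recursive step of the algorithm immediate, at the price of leaning on Proposition (\ref{prop}) for arbitrary intermediate states $|\Lambda^k\rangle$ rather than only the highest weight --- which is legitimate, since the underlying commutator identity is universal. One caveat: in your final paragraph the claim that nonvanishing of $n_k(\Lambda^k_i-(n_k-1))$ forces $\Lambda^k_i\geq n_k$ is false in general; it only forces $\Lambda^k_i\neq n_k-1$, and e.g.\ in the $G_2$ diagram one applies $E^{-}_1$ to the weight $(-1,1)$ even though its first Dynkin label is negative, because $p_1>0$ there. Fortunately that sub-claim is not needed: the observation you already made in your third step, that a nonzero word of lowering operators automatically traces a path, settles the verification, so the flaw sits in a redundant part of the argument and does not affect the conclusion.
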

\begin{proof}
We prove the theorem by induction on the level of the weight.  Assuming  the proposition is true for  weights of  level $k$, we prove the theorem for  weights of level $k+1$. The weight at the level $k+1$ is denoted as $w^{k+1}_{*}$.
For a weight of level $k+1$, the  case where only one arrow pointing it are shown in Fig.(\ref{t2})(b),(c),and (d).   The case where  at least  two arrows pointing it is shown in Fig.(\ref{t2})(a).

For the first case with one arrow, we have
$$|w^{k+1}\rangle=E^{-}_i|w^{k}\rangle,$$
where $|w^{k}\rangle$ is a state corresponding to a  path from the highest-weight state to the weight of level $k$. Then the inner product is
$$\langle w^{k+1}|w^{k+1}\rangle=\langle w^{k}|E^{+}_i|w^{k+1}\rangle.$$
We need to prove that the expansion of $E^{+}_i|w^{k+1}\rangle$ are states  corresponding to paths  from the highest-weight state to the weight of level $k$.
\begin{itemize}
  \item For  $E^{+}_i|w^{k}_3\rangle=0$ as shown in Fig.(\ref{t2})(b), according to the assumption made,   we have
 $$E^{+}_i|w^{k+1}_2\rangle=E^{+}_iE^{-}_i|w^{k}_3\rangle=H_i|w^{k}_3\rangle=h_i|w^{k}_3\rangle,$$
which leads to  the conclusion.
  \item For $E^{+}_i|w^{k}_4\rangle \neq 0$ and  $|w^{k}_4\rangle= E^{-}_i|w^{k-1}_3\rangle $ as shown in Fig.(\ref{t2})(c), according to the assumption made,  we have
\begin{equation}\label{oo1}
 E^{+}_i|w^{k}_4\rangle=a_1|w^{{k-1,j_1}}\rangle +a_2|w^{{k-1,j_2}}\rangle+\cdots+a_n|w^{{k-1,j_n}}\rangle.
\end{equation}
The states on the right hand side are states corresponding to paths from the highest-weight state to the weights of level $k-1$. Then
\begin{eqnarray*}
 E^{+}_i|w^{k+1}_3\rangle
&=&E^{+}_iE^{-}_i|w^{k}_4\rangle\\
 &=&(H_i+E^{-}_iE^{+}_i)|w^{k}_4\rangle\\
 &=&h_i|w^{k}_4\rangle+a_1E^{-}_i|w^{{k-1,j_1}}\rangle +a_2E^{-}_i|w^{{k-1,j_2}}\rangle+\cdots+a_nE^{-}_i|w^{{k-1,j_n}}\rangle,
\end{eqnarray*}
where we have used the expansion of $E^{+}_i|w^{k}_4\rangle$ (\ref{oo1}). The states on the last equality  are states corresponding to paths from the highest-weight state to the weights of level $k$. Thus we draw the conclusion.
   \item For $E^{+}_i|w^{k}_4\rangle \neq 0$ and  $|w^{k}_5\rangle= E^{-}_j|w^{k-1}_5\rangle, \, (i\neq j ) $,  then $|w^{k}_5\rangle= E^{-}_j E^{-}_i|w^{k-2}_1\rangle  $ as shown in Fig.(\ref{t2})(d).
According to the assumption made,  we have
\begin{eqnarray*}
  &&\langle w^{k,1}_5|w^{k,2}_5\rangle\nonumber \\
  &=& \langle w^{k-1,1}_4| E^{+}_i E^{-}_j |w^{k-1,2}_5\rangle \\
   &=& \langle w^{k-2,1}_1| E^{+}_j E^{+}_i E^{-}_j E^{-}_i|w^{k-2,2}_1\rangle\nonumber \\
     &=& \langle w^{k-2,1}_1| E^{+}_jE^{-}_j E^{+}_i  E^{-}_i|w^{k-2,2}_1\rangle.\nonumber
\end{eqnarray*}
 According to the assumption made,  we have
 $$E^{+}_i  E^{-}_i|w^{k-2,2}_1\rangle
   =  a'_1|w^{{k-2,j_1}}\rangle +a'_2|w^{{k-2,j_2}}\rangle+\cdots+a'_n|w^{{k-2,j_n}}\rangle.$$
 The states on the right hand side correspond to paths from the highest-weight state to the weights of level $k-2$.
As shown in Fig.(\ref{t2})(d), we have
\begin{eqnarray}\label{oo3}
  &&\langle w^{k,1}_5|w^{k,2}_5\rangle\nonumber \\
   &=& \langle w^{k-2,1}_1| E^{+}_j E^{-}_j (a'_1|w^{{k-2,j_1}}\rangle +a'_2|w^{{k-2,j_2}}\rangle+\cdots+a_n|w^{{k-2,j_n}}\rangle) \\
   &=& \langle  w^{k-2,1}_1|E^{+}_j( a_1|w^{{k-1,j_1}}\rangle +a_2|w^{{k-1,j_2}}\rangle+\cdots+a_n|w^{{k-1,j_n}}\rangle)\nonumber
\end{eqnarray}
where $|w^{{k-1,j_i}}\rangle$  on the right hand side are states corresponding to paths from the highest-weight state to the weights of level $k-1$. Then
\begin{eqnarray*}
 E^{+}_i|w^{k+1}_4\rangle&=&E^{+}_iE^{-}_iE^{-}_j E^{-}_i|w^{k-2}_1\rangle=(H_i+E^{-}_iE^{+}_i)E^{-}_j E^{-}_i|w^{k-2}_1\rangle\\
 &=&h_iE^{-}_j E^{-}_i|w^{k-2}_1\rangle+a_1E^{-}_i|w^{{k-1,j_1}}\rangle +a_2E^{-}_i|w^{{k-1,j_2}}\rangle+\cdots+a_nE^{-}_i|w^{{k-1,j_n}}\rangle,
\end{eqnarray*}
where we have used the expansion of $E^{+}_i|w^{k}_5\rangle$ (\ref{oo3}). The states on the last equality are states corresponding to paths from the highest-weight state to the weights of level $k$.  We draw the conclusion.
\end{itemize}

Next we consider case where   two arrows  pointing  to the state $w^{{k+1}}$ as shown in Fig.(\ref{t2})(a).
$|w^{k,1}\rangle$ and  $|w^{k,2}\rangle$ are states of level $k$ satisfying
$$|w^{k}_1\rangle=E^{-}_j|w^{{k-1}}_1\rangle,\quad\quad|w^{k}_2\rangle=E^{-}_i|w^{{k-1}}_1\rangle.$$
According to the  assumption made, we have
\begin{eqnarray}\label{expandd1}
&&E^{+}_j|w^{k}_1\rangle=E^{+}_jE^{-}_j|w^{{k-1,1}}_1\rangle=a_1|w^{{k-1,j_1}}\rangle +a_2|w^{{k-1,j_2}}\rangle+\cdots+a_n|w^{{k-1,j_n}}\rangle,
\end{eqnarray}
\begin{eqnarray}\label{expandd2}
&& E^{+}_i|w^{k}_2\rangle=E^{+}_iE^{-}_i|w^{{k-1,2}}_1\rangle=b_1|w^{{k-1,i_1}}\rangle +b_2|w^{{k-1,i_2}}\rangle+\cdots+b_m|w^{{k-1,i_m}}\rangle.
\end{eqnarray}
For the inner product $\langle w^{k+1,2}|w^{k+1,1}\rangle =  \langle w^{k,2}_{1}| E^{+}_{i}E^{-}_{j}|w^{k,1}_{2}\rangle$,  we have  to prove $ E^{+}_{i}E^{-}_{j}|w^{k,1}_{2}\rangle$ are states  corresponding to paths  from the highest-weight state to the weight of level $k$.
We have
\begin{eqnarray*}
  \langle w^{k+1,2}|w^{k+1,1}\rangle &=&  \langle w^{k,2}_{1}| E^{+}_{i}E^{-}_{j}|w^{k,1}_{2}\rangle\\
&=& \langle w^{k-1,2}_{1}|E^{+}_{j} E^{+}_{i}E^{-}_{j}E^{-}_{i}|w^{k,1}_{2}\rangle\\
&=&\langle w^{k-1,2}_{1}|E^{+}_{j} E^{-}_{j}E^{+}_{i}E^{-}_{i}|w^{k,1}_{2}\rangle  \\
 &=& \langle w^{k-1,2}_{1}|E^{+}_{j} E^{-}_{j}(b_1|w_{{k-1}_{i_1}}\rangle +b_2|w^{{k-1,i_2}}\rangle+\cdots+b_m|w^{{k-1,i_m}}\rangle)\\
  &=&\langle w^{k-1,2}_{1}|E^{+}_{j}(b_1 E^{-}_{j}|w^{{k-1,i_1}}\rangle +b_2 E^{-}_{j}|w^{{k-1,i_2}}\rangle+\cdots+b_m E^{-}_{j}|w^{{k-1,i_m}}\rangle),
\end{eqnarray*}
where we have used the expansion (\ref{expandd2}).  The states  of the last equality in brackets correspond to paths from the highest-weight state to the weights of level $k$. Thus we draw the conclusion.

For the inner product $\langle w^{k+1,1}|w^{k+1,2}\rangle =  \langle w^{k,1}_{2}| E^{+}_{j}E^{-}_{i}|w^{k,2}_{1}\rangle$,  we have  to prove $ E^{+}_{j}E^{-}_{i}|w^{k,2}_{1}\rangle$ are states  corresponding to paths  from the highest-weight state to the weight of level $k$.
Then we get
\begin{eqnarray*}
 \langle w^{k+1,1}|w^{k+1,2}\rangle &=&  \langle w^{k,1}_{1}| E^{+}_{i}E^{-}_{j}|w^{k,2}_{1}\rangle\\
&=& \langle w^{k-1,1}_1|E^{+}_{i}E^{+}_{j} E^{-}_{i}E^{-}_{j}|w^{k-1,2}_1\rangle\\
&=&\langle w^{k-1,1}_1|E^{+}_{i} E^{-}_{i}E^{+}_{j}E^{-}_{j}|w^{k-1,2}_1\rangle  \\
 &=& \langle w^{k-1,1}_1|E^{+}_{i} E^{-}_{i}(a_1|w^{{k-1,j_1}}\rangle +a_2|w^{{k-1,j_2}}\rangle+\cdots+a_n|w^{{k-1,j_n}}\rangle)\\
  &=& \langle w^{k-1,1}_1|E^{+}_{i}(a_1E^{-}_{i}|w^{{k-1,j_1}}\rangle +a_2E^{-}_{i}|w^{{k-1,j_2}}\rangle+\cdots+a_nE^{-}_{i}|w^{{k-1,j_n}}\rangle),
\end{eqnarray*}
where we have used the expansion (\ref{expandd1}). The states  of the last equality in brackets correspond to path from the highest-weight state to the weights of level $k$. Thus we draw the conclusion.

For the inner product of $\langle w^{k+1,1}|w^{k+1,1}\rangle$ and  $\langle w^{k+1,2}|w^{k+1,2}\rangle$, the proofs  are reduced to the first case.
\end{proof}
\begin{rema}
  The actions of the operators reveal the global  information of the weights of the  highest-weight representations.
\end{rema}

By using  formula (\ref{propp}) again and again, the algorithm for calculating the inner product present itself now.
\begin{algorithm}\label{algo}
The states $ |\upsilon^1_{w}\rangle$ and $|\upsilon^2_{w}\rangle$  are  defined by  paths from the highest-weight vector $\Lambda$ to the weight $w$. To calculate  the inner product,
 \begin{equation}
    \langle \upsilon^1_{w}|\upsilon^2_{w}\rangle=\langle\Lambda_s|E^{+}_{i_1} \cdots E^{+}_{i_n}|E^{-}_{j_n}\cdots E^{-}_{j_1}|\Lambda_s\rangle,\nonumber
\end{equation}
 the operators defining  the state  $\langle \upsilon^1_{w}|$ act on the state $|\upsilon^2_{w}\rangle$ in sequence from right to left  using formula (\ref{prop}).  The  actions of the raising operators would reduce the states on the right  to the states  one to one correspondence  with  paths  in the weight diagram. These processes continue until no operators are left.
\end{algorithm}
\begin{flushleft}
The algorithm is an iterative process because of Theorem \ref{ite}. Thus the  efficiency of calculations of inner product  is improved greatly.
\end{flushleft}

Finally, we would like to point out a fact.
\begin{theorem}\label{iii2}
The inner product of states depends on paths  defining the states.
\end{theorem}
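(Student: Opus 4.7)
Proof plan: The statement is existential, so it suffices to exhibit one pair of paths to the same weight whose associated inner products do not all coincide. One must be careful about the choice of example. The natural candidate from the introduction --- the weight $(-1,1,-1)$ in $\rho_2$ of $A_3$ --- is in fact unsuitable: since $\alpha_1+\alpha_3$ is not a root of $A_3$, one has $[E^-_1,E^-_3]=0$, and the two paths $E^-_3 E^-_1 E^-_2|\Lambda\rangle$ and $E^-_1 E^-_3 E^-_2|\Lambda\rangle$ in fact define the \emph{same} state. More generally, in a minuscule representation every weight space is one-dimensional, so any two paths to a common weight produce proportional kets and path-dependence cannot appear; one must work in a non-minuscule setting.

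The smallest non-minuscule example illustrating the phenomenon is the adjoint representation of $A_2$ with highest weight $\Lambda=\omega_1+\omega_2$. Its zero weight space is two-dimensional, and there are exactly two length-two paths from $\Lambda$ to that weight,
\[
|\upsilon^1\rangle = E^-_1 E^-_2|\Lambda\rangle,\qquad |\upsilon^2\rangle = E^-_2 E^-_1|\Lambda\rangle,
\]
whose difference $[E^-_1,E^-_2]|\Lambda\rangle\propto E^-_{\alpha_1+\alpha_2}|\Lambda\rangle$ is nonzero. Using \eqref{cn} together with $E^+_i|\Lambda\rangle=0$, or equivalently applying Proposition \ref{prop}, one computes
\[
\langle\upsilon^1|\upsilon^1\rangle=\langle\upsilon^2|\upsilon^2\rangle=2,\qquad \langle\upsilon^1|\upsilon^2\rangle=1.
\]
Since the cross inner product differs from the two norms, the value of the inner product genuinely depends on the paths used to define the states, which is precisely the assertion.

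The only conceptual obstacle is selecting the example: the phenomenon is invisible in minuscule representations, and so the introduction's showcase example cannot serve as a witness. Once a weight with branching in the weight diagram of a non-minuscule representation is in hand, the computation reduces to a short, mechanical application of the commutation relations \eqref{cn} (or, more systematically, of the recursion in Proposition \ref{prop} and Theorem \ref{ite}), and the discrepancy between cross and self inner products follows by direct inspection.
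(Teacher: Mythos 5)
Your proposal is correct and follows the same strategy as the paper: since the claim is existential, both proofs consist of exhibiting a single weight with two inequivalent paths and computing the resulting inner products. The difference is in the witness. The paper works in the fundamental representation $(0,1)$ of $G_2$ and uses the two level-five paths to the weight $(0,0)$, obtaining the values $72$, $36$, $24$ by direct commutation (and again by the recursion algorithm in the appendix). You instead use the adjoint representation of $A_2$ with $\Lambda=\omega_1+\omega_2$ and the two level-two paths $E^-_1E^-_2|\Lambda\rangle$, $E^-_2E^-_1|\Lambda\rangle$; your values $\langle\upsilon^1|\upsilon^1\rangle=\langle\upsilon^2|\upsilon^2\rangle=2$ and $\langle\upsilon^1|\upsilon^2\rangle=1$ check out against the relations \eqref{cn}, and the computation is shorter and more transparent than the $G_2$ one. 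Your preliminary remark is also a genuine addition: the introduction's $A_3$ example with weight $(-1,1,-1)$ cannot witness the theorem because $\alpha_1+\alpha_3$ is not a root, so $[E^-_1,E^-_3]=0$ and the two ``paths'' define the identical state; more generally path-independence in minuscule representations is exactly the content of Theorem \ref{mf}. One small caution: your stated reason that one-dimensionality of the weight spaces alone rules out path dependence is slightly too quick, since two proportional kets with different proportionality constants would still have different norms; the actual reason the phenomenon is invisible there is Theorem \ref{mf}'s computation that all such inner products equal one. This does not affect the validity of your proof, whose burden is carried entirely by the $A_2$ example.
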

\begin{figure}[!ht]
  \begin{center}
    \includegraphics[width=6.8in]{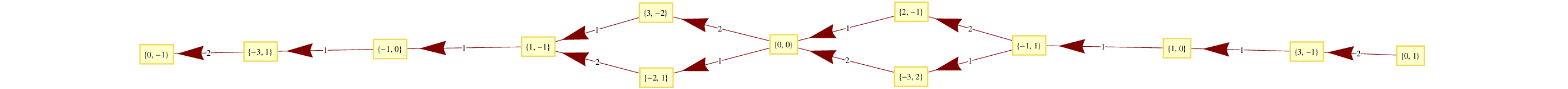}
  \end{center}
  \caption{Weights in the fundamental representation $(0,1)$ of $G_2$. }
  \label{G2}
\end{figure}
We  illustrate this fact through examples.  As shown in Fig.(\ref{G2}), there are two paths from the highest-weight state $(0,1)$ to the weight $(0,0)$. These two paths correspond to the following  states
\begin{eqnarray*}
   && |\upsilon^{5,1}_{(0,0)}\rangle=E^{-}_{2}E^{-}_{1}E^{-}_{1}E^{-}_{1}E^{-}_{2}|\Lambda_2\rangle,  \\
   && |\upsilon^{5,2}_{(0,0)}\rangle= E^{-}_{1}E^{-}_{2}E^{-}_{1}E^{-}_{1}E^{-}_{2}|\Lambda_2\rangle.
\end{eqnarray*}
The  inner products can be calculated  by commutating   the operators directly as follows
\begin{eqnarray*}
  \langle\upsilon^{5,1}_{(0,0)}|\upsilon^{5,1}_{(0,0)}\rangle &=&\langle \Lambda_2| E^{+}_{2}E^{+}_{1}E^{+}_{1}E^{+}_{1}E^{+}_{2}|E^{-}_{2}E^{-}_{1}E^{-}_{1}E^{-}_{1}E^{-}_{2}|\Lambda_2\rangle=72, \\
 \langle\upsilon^{5,2}_{(0,0)} |\upsilon^{5,1}_{(0,0)}\rangle &=&\langle \Lambda_2| E^{+}_{2}E^{+}_{1}E^{+}_{1}E^{+}_{2}E^{+}_{1}| E^{-}_{2}E^{-}_{1}E^{-}_{1}E^{-}_{1}E^{-}_{2}|\Lambda_2\rangle=24, \label{a2a2a2}\\
 \langle\upsilon^{5,2}_{(0,0)} |\upsilon^{5,2}_{(0,0)}\rangle &=&\langle \Lambda_2|E^{+}_{2}E^{+}_{1}E^{+}_{1}E^{+}_{2}E^{+}_{1}|E^{-}_{1}E^{-}_{2}E^{-}_{1}E^{-}_{1}E^{-}_{2}|\Lambda_2\rangle=36,
\end{eqnarray*}
which are different from each other.
In the appendix, we will repeatedly  calculate these inner products by using  Algorithm \ref{algo}.

\section{Applications }
In previous sections, we proposed an iterative algorithm for calculating  the inner product of states of   highest-weight  representation.   In this section, we discuss the  applications of the algorithm, such as the norm of states and the unitarity of representations.  And we  completely  determine the inner products of states of the minuscule representations.   Algorithm \ref{algo} also work  for the highest-weight representation of affine Lie algebra, which is infinite dimension. Finally, we note that it can be used to study the solution of the \textbf{KW} equations, where the inner products are related to the fundamental representations only.
\subsection{Unitarity and Norm}
As an application of Algorithm \ref{algo}, we discuss the calculations of the  norms of states. For the state $|\lambda\rangle= E_{j_{n(w)}}^-\cdots E_{j_1}^-|\Lambda_s\rangle$, the norm is
$$\langle\lambda|\lambda\rangle = \langle\Lambda_s|E_{j_1}^+\cdots  E_{j_{n(w)}}^+|E_{j_{n(w)}}^-\cdots E_{j_1}^-|\Lambda_s\rangle.$$
The  state  $|\lambda\rangle$ in the height weight representation is given by
\begin{equation}\label{sta}
|\lambda\rangle= (E_{i}^{-})^{n_1}{}_{|\Lambda^1\rangle} \prod_{b=1}^{m_2}E_{j_b}^{-}(E_{i}^{-})^{n_2} {}_{|\Lambda^2\rangle} \cdots \prod_{b=1}^{m_{l}}E_{j_b}^{-}(E_{i}^{-})^{n_l} {}_{|\Lambda^l\rangle} \prod_{b=1}^{m_{l+1}}E_{j_b}^{-}|\Lambda\rangle.
\end{equation}
The notations have the same means with that in  equality (\ref{propp}), with  the  subscript $i$ of $m_i$  begin from two.

To calculate the norm of  $|\lambda\rangle$, we need to take the  action $E_{i}^{+}|\lambda\rangle$ firstly.
Thus, we rewrite the identity (\ref{propp}) as follows
\begin{eqnarray}
&&E_{i}^{+} (E_{i}^{-})^{n_1}{}_{|\Lambda^1\rangle} \prod_{b=1}^{m_2}E_{j_b}^{-}(E_{i}^{-})^{n_2} {}_{|\Lambda^2\rangle} \cdots \prod_{b=1}^{m_{l}}E_{j_b}^{-}(E_{i}^{-})^{n_l} {}_{|\Lambda^l\rangle} \prod_{b=1}^{m_{l+1}}E_{j_b}^{-}|\Lambda\rangle  \\
  =&&\sum_{k=1}^t (E_{i}^{-})^{n_1} \cdots n_k(\Lambda^k_i-(n_k-1))\prod_{b=1}^{m_k}E_{j_b}^{-}(E_{i}^{-})^{n_k-1}  \cdots \prod_{b=1}^{m_{t}}E_{j_b}^{-}(E_{i}^{-})^{n_t}|\Lambda^t\rangle.\nonumber
\end{eqnarray}

We have the  following conjecture for the coefficients   on the right hand side of  the identity.
\begin{conj}\label{co}
$n_k(\Lambda^k_i-(n_k-1))\geq 0$.
\end{conj}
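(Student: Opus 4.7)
The plan is to reduce the positivity to a statement about the $\alpha_i$-weight string through $|\Lambda^k\rangle$. Since $(E_i^-)^{n_k}|\Lambda^k\rangle$ is a non-zero sub-state along a path in the weight diagram of $L_\Lambda$, it lies in the $\mathfrak{sl}_2^{(i)}$-module generated by $|\Lambda^k\rangle$. Writing $p_i(\Lambda^k)$ and $q_i(\Lambda^k)$ for the lengths of the $\alpha_i$-string above and below $|\Lambda^k\rangle$, the relation $q_i-p_i=\Lambda^k_i$ of \eqref{pq} together with the non-vanishing constraint $(E_i^-)^{n_k}|\Lambda^k\rangle\neq 0$ forces $n_k\le q_i(\Lambda^k)=\Lambda^k_i+p_i(\Lambda^k)$. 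Hence the conjecture is equivalent to the structural bound $p_i(\Lambda^k)\le 1$ for every intermediate state arising on the right-hand side of \eqref{propp}.

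I would then prove $p_i(\Lambda^k)\le 1$ by downward induction on $k$. The base case $k=l$ is immediate: $|\Lambda^l\rangle=\prod_{b=1}^{m_{l+1}} E_{j_b}^-|\Lambda\rangle$ has all $j_b\neq i$, and the commutation relations \eqref{cn} together with $E_i^+|\Lambda\rangle=0$ give $E_i^+|\Lambda^l\rangle=0$, so in fact $p_i(\Lambda^l)=0$, yielding the sharp bound $\Lambda^l_i-(n_l-1)\ge 1$. For the inductive step one has $|\Lambda^k\rangle=\prod_{b=1}^{m_{k+1}} E_{j_b}^-(E_i^-)^{n_{k+1}}|\Lambda^{k+1}\rangle$, and \eqref{bfi} permits a direct computation of $E_i^+|\Lambda^k\rangle$ and $(E_i^+)^2|\Lambda^k\rangle$: since $[E_i^+,E_{j_b}^-]=0$ for $j_b\neq i$ the non-$i$ operators are passed freely, and the inductive hypothesis $p_i(\Lambda^{k+1})\le 1$ should restrict the output to at most one non-trivial raising step through the block $(E_i^-)^{n_{k+1}}$.

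The main obstacle is the circularity of this inductive step: expanding $E_i^+|\Lambda^k\rangle$ via \eqref{bfi} produces a scalar $n_{k+1}(\Lambda^{k+1}_i-(n_{k+1}-1))$ — precisely the quantity whose sign is in question — multiplying a descendant state, plus a residual term involving $E_i^+|\Lambda^{k+1}\rangle$ which need not vanish when $p_i(\Lambda^{k+1})=1$. To break the circularity I would either perform a simultaneous induction tracking the pair $\bigl(p_i(\Lambda^k),\,\Lambda^k_i-(n_k-1)\bigr)$, or appeal to an indirect positivity argument: the norm $W_w=\langle\upsilon_w|\upsilon_w\rangle$ is manifestly non-negative by unitarity of the finite-dimensional irreducible representation, and the recursive expansion via \eqref{propp} writes $W_w$ as a sum of products of the conjectural factors; combined with the vanishing of inner products between path-states of different weights, established in Theorems \ref{ii1} and \ref{ii2}, this would force each surviving factor to be non-negative. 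Finally, a direct case analysis across each classical series $A_r,B_r,C_r,D_r$ — using the explicit weight-lattice structure and the fact that the off-diagonal Cartan entries $A_{ji}$ control the growth of $p_i$ across non-$i$ steps — would verify the structural bound $p_i(\Lambda^k)\le 1$ in the representations relevant to the \textbf{KW} equations.
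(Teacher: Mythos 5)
First, a point of orientation: the paper does not prove this statement. It is presented as a conjecture, and the authors write explicitly that they ``are not able to prove it presently,'' so there is no argument of theirs for you to match; the only question is whether your outline closes the gap, and it does not. Your own diagnosis is accurate --- the inductive step is circular, and none of the three escape routes you sketch is actually carried out. The ``indirect positivity'' route is moreover logically invalid: positivity of $W_w$ is automatic from the positive-definite contravariant form on a finite-dimensional irreducible module (it needs no algorithm), but a sum being non-negative does not force each summand in a particular expansion to be non-negative.

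Second, and more seriously, the structural bound $p_i(\Lambda^k)\le 1$ to which you correctly reduce the conjecture is false in general, and with it the conjecture itself in the stated generality. Take $\mathfrak{g}=A_2$ with highest weight $\Lambda=(2,2)$ and $|\Lambda^1\rangle=E_2^-(E_1^-)^2|\Lambda\rangle$, of weight $(-1,2)$. One computes $(E_1^+)^2|\Lambda^1\rangle=4E_2^-|\Lambda\rangle\neq 0$, so $p_1(|\Lambda^1\rangle)=2$ and hence $q_1=\Lambda^1_1+p_1=1$; thus $E_1^-E_2^-(E_1^-)^2|\Lambda\rangle$ is a legitimate path state (its squared norm is $8$). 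Applying (\ref{prop}) with $i=1$, $n_1=1$ to this state produces the coefficient $n_1(\Lambda^1_1-(n_1-1))=1\cdot(-1)=-1<0$; the norm stays positive only because this negative term is compensated by the $k=2$ term, whose coefficient is $2(2-1)=2$. This shows concretely where your induction breaks: $p_i$ can jump by $2$ across a single block, precisely the circularity you flagged. Any correct proof must therefore restrict the class of representations --- the bound holds trivially for minuscule weights, where all $\alpha_i$-strings have length one, and plausibly for the fundamental representations entering the \textbf{KW} formula --- and your proposed case analysis over the classical series would have to make that restriction explicit rather than establish $p_i(\Lambda^k)\le 1$ unconditionally.
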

 Unfortunately, we are not able to prove it presently.  We have verified it through numerous examples  and  some of which are  given   in the appendix. The validity of the conjecture is based on the paths in the weight diagram of the highest-weight representation  defining the state $|\lambda\rangle$.
This conjecture guarantees that coefficients are positive for each step of  \textbf{Algorithm}  \ref{algo}.   Thus,  the norm is positive for any state in the representation, which implies  the unitary of the space. This also holds for linear combinations  of such states.

To calculate the norm of  $|\lambda\rangle$, we need to take the following action.
\begin{eqnarray}
&&E_{i}^{+})^{n_1}|\lambda\rangle \nonumber\\
 &=&(E_{i}^{+})^{n_1-1}E_{i}^{+} (E_{i}^{-})^{n_1}{}_{|\Lambda^1\rangle} \prod_{b=1}^{m_2}E_{j_b}^{-}(E_{i}^{-})^{n_2} {}_{|\Lambda^2\rangle} \cdots \prod_{b=1}^{m_{l}}E_{j_b}^{-}(E_{i}^{-})^{n_l} {}_{|\Lambda^l\rangle} \prod_{b=1}^{m_{l+1}}E_{j_b}^{-}|\Lambda\rangle  \nonumber\\
  &=&(E_{i}^{+})^{n_1-1} \sum_{k=l}^t (E_{i}^{-})^{n_1}{}_{|\Lambda^1+\alpha^k\rangle} \prod_{b=1}^{m_2}E_{j_b}^{-}(E_{i}^{-})^{n_2} {}_{|\Lambda^2+\alpha^k\rangle} \cdots \\
  && n_k(\Lambda^k_i-(n_k-1))\prod_{b=1}^{m_k}E_{j_b}^{-}(E_{i}^{-})^{n_k-1} {}_{|\Lambda^k\rangle}\cdots \prod_{b=1}^{m_{t}}E_{j_b}^{-}(E_{i}^{-})^{n_t}|\Lambda^t\rangle.\nonumber
\end{eqnarray}
The operator $E_{i}^{+}$  decrease one  $E_{i}^{-}$ in each product factor of  $|\lambda\rangle$  containing   $E_{i}^{-}$. After the action of operators $(E_{i}^{+})^{n_1}$,   the number of  terms on the right hand side of  the last equality is
\begin{equation}\label{comb}
  C^{k_1}_{n_1}+C^{k_2}_{n_1-k_1}+\cdots +C^{k_{t-1}}_{n_1-k_1-\cdots -k_{t-2}}+C^{k_{t}}_{n_1-k_1-\cdots -k_{t-1}}
\end{equation}
with $k_1+k_2+ \cdots +k_t=n_1$. For the last term, we have
$$C^{k_{t}}_{n_1-k_1-\cdots -k_{t-1}}=C^{k_{t}}_{k_{t}}=1.$$

There are many difficulties to get a closed formula of norm of $\lambda$. According to formula (\ref{le0}),  the operators  in the conjunction state  $\langle\lambda|$  would  decrease the same   operators    in the state $|\lambda\rangle$.
\begin{enumerate}
  \item The combination formula (\ref{comb})  imply there is no simple closed formula to describe the norm of $|\lambda\rangle$.
  \item The two adjacent product  factors of $E_{i}^{-}$  become one when the product factor  $\prod_{b=1}^{m_{l}}E_{j_b}^{-}$ are decreased by the operators in $\langle\lambda|$.  The actions of operators lead to   more complicated  combination formula   than formula  (\ref{comb}).
  \item Note that  $|\Lambda^t\rangle$ would not be annihilated  by  operators except $E_{i}^{+}$.
\end{enumerate}

\begin{flushleft}
\textbf{Special states}:  There are states corresponding to a particular type of paths in the weight diagram,  whose norms   can be fully determined.

These  states and the norms are given by the following proposition.
\end{flushleft}
\begin{prop}
For the highest-weight $\Lambda=\Sigma_a \lambda_a \omega_a$, we have
\begin{eqnarray}\label{proppp}
&&|\lambda\rangle=\prod_{b=1}^{m_1}E_{j_b}^{-} (E_{i_1}^{-})^{n_1}{}_{|\Lambda^1\rangle} \prod_{b=1}^{m_2}E_{j_b}^{-}(E_{i_2}^{-})^{n_2} {}_{|\Lambda^2\rangle} \cdots \prod_{b=1}^{m_{l}}E_{j_b}^{-}(E_{i_l}^{-})^{n_l} {}_{|\Lambda^l\rangle} \prod_{b=1}^{m_{l+1}}E_{j_b}^{-}|\Lambda\rangle,\\
&&|\lambda^{'}\rangle= \prod_{b=m_0}^{m_k}E_{j_b}^{-}(E_{i_k}^{-})^{n_k} {}_{|\Lambda^k\rangle} \cdots \prod_{b=1}^{m_{l}}E_{j_b}^{-}(E_{i_l}^{-})^{n_l} {}_{|\Lambda^l\rangle} \prod_{b=1}^{m_{l+1}}E_{j_b}^{-}|\Lambda\rangle,
\end{eqnarray}
 where the   states  satisfy the following constraints
$$E_{i_k}^{+}|\Lambda^k\rangle=0, \quad E_{j_{m_0-1}}|\lambda^{'}\rangle=0, \quad m_0=1,\cdots,m_k,\quad k=1,\cdots,l.$$
Then the  norm of $|\lambda\rangle$ is given by
$$\langle\lambda|\lambda\rangle=\prod_{k=1}^{l}(\Lambda^k_{i_k}!)^2.$$
And the  norm of $|\lambda^{'}\rangle$ is given by
$$\langle\lambda^{'}|\lambda^{'}\rangle=\prod_{b=k}^{l}(\Lambda^b_{i_b}!)^2.$$
\end{prop}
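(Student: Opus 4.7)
The plan is to evaluate $\langle\lambda|\lambda\rangle$ by induction on the number of lowering blocks $l$, peeling off the outermost group $\prod_{b=1}^{m_1}E_{j_b}^{-}(E_{i_1}^{-})^{n_1}$ at each step. Writing $|\lambda\rangle = P_{1}(E_{i_1}^{-})^{n_1}|\Lambda^{1}\rangle$ with $P_{1} = \prod_{b=1}^{m_1}E_{j_b}^{-}$, I would first commute the bra factor $(E_{i_1}^{+})^{n_1}$ past $P_{1}^{\dagger}$ and past the outer $P_{1}$'s using the relations $[E_{i_1}^{+}, E_{j_b}^{\pm}] = 0$ (which hold because $j_b \neq i_1$ within the block, as built into the path structure from Theorem \ref{ite}), so that these factors act directly on the central sandwich $(E_{i_1}^{+})^{n_1}(E_{i_1}^{-})^{n_1}|\Lambda^{1}\rangle$.

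The cut-off condition $E_{i_1}^{+}|\Lambda^{1}\rangle = 0$ turns formula (\ref{propp}) into a single surviving summand at each application, and iterating (\ref{le}) gives
\begin{equation*}
(E_{i_1}^{+})^{n_1}(E_{i_1}^{-})^{n_1}|\Lambda^{1}\rangle = n_1!\,\Lambda^{1}_{i_1}\bigl(\Lambda^{1}_{i_1}-1\bigr)\cdots\bigl(\Lambda^{1}_{i_1}-n_1+1\bigr)\,|\Lambda^{1}\rangle,
\end{equation*}
which, under the structural convention that the $\mathfrak{sl}_{2}$-string of $\alpha_{i_1}$ is traversed maximally (i.e.\ $n_1 = \Lambda^{1}_{i_1}$, consistent with (\ref{string})), reduces to $(\Lambda^{1}_{i_1}!)^{2}|\Lambda^{1}\rangle$. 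The inductive hypothesis applied to the $(l-1)$-block state rooted at $|\Lambda^{1}\rangle$ then supplies the remaining factor $\prod_{k=2}^{l}(\Lambda^{k}_{i_k}!)^{2}$, and multiplying delivers the claimed product. The derivation for $|\lambda'\rangle$ follows the same template: the first block is replaced by its truncation starting at $b = m_0$, and the constraint $E^{+}_{j_{m_0-1}}|\lambda'\rangle = 0$ plays the role of the cut-off that otherwise comes from $E_{i_k}^{+}|\Lambda^{k}\rangle = 0$, yielding the truncated product $\prod_{b=k}^{l}(\Lambda^{b}_{i_b}!)^{2}$.

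The main obstacle I anticipate is justifying rigorously that the transition decorations $P_{k}$ between successive lowering blocks collapse cleanly after each block's factorial has been extracted, rather than leaving residuals from the subleading terms of (\ref{propp}). The cleanest route is probably a single global application of (\ref{propp}) with $i = i_1$ to the entire state, after which the tail summands (those that still involve the deeper intermediate states $|\Lambda^{k}\rangle$ for $k \ge 2$) should be killed by the upstream raising operators in the bra via a mismatch of indices combined with the successive cut-offs $E_{i_k}^{+}|\Lambda^{k}\rangle = 0$. Making this vanishing fully explicit in the presence of generic $P_{k}$ is the delicate bookkeeping step I expect to be the heart of the argument.
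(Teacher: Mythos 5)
Your overall strategy coincides with the paper's: peel the state block by block, use the cut-off condition $E_{i_k}^{+}|\Lambda^k\rangle=0$ together with $n_k=\Lambda^k_{i_k}$ to evaluate $(E_{i_k}^{+})^{n_k}(E_{i_k}^{-})^{n_k}|\Lambda^k\rangle=(\Lambda^k_{i_k}!)^2|\Lambda^k\rangle$ via the iterated form of (\ref{le}), and recurse. That central computation is right. However, there is a concrete false step in how you set it up: you justify moving $(E_{i_1}^{+})^{n_1}$ past $P_1^{\dagger}=\prod_b E_{j_b}^{+}$ by the relation $[E_{i_1}^{+},E_{j_b}^{+}]=0$. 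Only the mixed relation $[E_i^{+},E_j^{-}]=\delta_{ij}H_j$ from (\ref{cn}) holds for $i\neq j$; two raising operators attached to \emph{adjacent} simple roots do not commute (e.g.\ in $A_2$, $[E_1^{+},E_2^{+}]$ is proportional to the raising operator of $\alpha_1+\alpha_2$), so your reordering generates extra terms that you have not shown to vanish. The paper avoids this entirely by never commuting two raising operators past each other: it acts with the bra operators in their natural order, innermost first, so each $E_{j_b}^{+}$ only ever crosses lowering operators before meeting its partner $E_{j_b}^{-}$, and each $E_{i_k}^{+}$ arrives at its block the same way. Your argument should be rearranged along those lines; the commutation you invoke cannot be used as stated.

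The second issue is the one you flag yourself: the claim that the transition factors $P_k^{\dagger}P_k$ collapse with coefficient $1$. This is exactly where the hypotheses $E_{j_{m_0-1}}|\lambda^{'}\rangle=0$ for all truncations $m_0$ enter — they control the weight component $\lambda'_{j_{m_0-1}}$ picked up when $E_{j_{m_0-1}}^{+}$ meets $E_{j_{m_0-1}}^{-}$ via $H_{j_{m_0-1}}$ — and your proposal does not connect these constraints to the vanishing/normalization you need. To be fair, the paper's own proof also only asserts this collapse rather than deriving it in detail, but since you identify it as "the heart of the argument" and leave it unresolved, the proof is incomplete on this point as well.
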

\begin{proof} According to the construction  of the path, we have $n_k=\Lambda^k_{i_k},(k=1,\cdots,l)$ and
\begin{eqnarray*}
  &&\langle\lambda|\lambda\rangle \\
&=& \langle\cdots E_{j_1}^{+}\prod_{b=1}^{m_1}E_{j_b}^{-} (E_{i_1}^{-})^{n_1}{}_{|\Lambda^1\rangle} \prod_{b=1}^{m_2}E_{j_b}^{-}(E_{i_2}^{-})^{n_2} {}_{|\Lambda^2\rangle} \cdots \prod_{b=1}^{m_{l}}E_{j_b}^{-}(E_{i_l}^{-})^{n_l} {}_{|\Lambda^l\rangle} \prod_{b=1}^{m_{l+1}}E_{j_b}^{-}|\Lambda\rangle.\\
   &=& \langle\cdots E_{j_2}^{+}\prod_{b=2}^{m_1}E_{j_b}^{-} (E_{i_1}^{-})^{n_1}{}_{|\Lambda^1\rangle} \prod_{b=1}^{m_2}E_{j_b}^{-}(E_{i_2}^{-})^{n_2} {}_{|\Lambda^2\rangle} \cdots \prod_{b=1}^{m_{l}}E_{j_b}^{-}(E_{i_l}^{-})^{n_l} {}_{|\Lambda^l\rangle} \prod_{b=1}^{m_{l+1}}E_{j_b}^{-}|\Lambda\rangle.\\
&=& \langle\cdots (E_{n_1}^{+})^{n_1-1}E_{n_1}^{+} (E_{i_1}^{-})^{n_1}{}_{|\Lambda^1\rangle} \prod_{b=1}^{m_2}E_{j_b}^{-}(E_{i_2}^{-})^{n_2} {}_{|\Lambda^2\rangle} \cdots \prod_{b=1}^{m_{l}}E_{j_b}^{-}(E_{i_l}^{-})^{n_l} {}_{|\Lambda^l\rangle} \prod_{b=1}^{m_{l+1}}E_{j_b}^{-}|\Lambda\rangle.\\
&=& \langle\cdots (E_{n_1}^{+})^{n_1-1}n_1(1) (E_{i_1}^{-})^{n_1}{}_{|\Lambda^1\rangle} \prod_{b=1}^{m_2}E_{j_b}^{-}(E_{i_2}^{-})^{n_2} {}_{|\Lambda^2\rangle} \cdots \prod_{b=1}^{m_{l}}E_{j_b}^{-}(E_{i_l}^{-})^{n_l} {}_{|\Lambda^l\rangle} \prod_{b=1}^{m_{l+1}}E_{j_b}^{-}|\Lambda\rangle.\\
&=& \langle\cdots (E_{n_2}^{+})^{n_2}\prod_{b=1}^{m_2}E_{j_b}^{+}(\Lambda^1_{i_1}!)^2 \prod_{b=1}^{m_2}E_{j_b}^{-}(E_{i_2}^{-})^{n_2} {}_{|\Lambda^2\rangle} \cdots \prod_{b=1}^{m_{l}}E_{j_b}^{-}(E_{i_l}^{-})^{n_l} {}_{|\Lambda^l\rangle} \prod_{b=1}^{m_{l+1}}E_{j_b}^{-}|\Lambda\rangle.\\
&=&\cdots\\
&=&\prod_{k=1}^{l}(\Lambda^k_{i_k}!)^2.
\end{eqnarray*}
Similarly, we can get  the  norm of $|\lambda^{'}\rangle$.
\end{proof}
\begin{example}
The states corresponding to the weights in Fig.(\ref{gcd}) belong to  states in this proposition.
\end{example}

We have a  generalization  of formula (\ref{proppp}).
The state is
$$|\lambda\rangle=(E_{i_0}^{-})^{n_0}{}_{|\Lambda^0\rangle}\prod_{b=1}^{m_1}E_{j_b}^{-} (E_{i_1}^{-})^{n_1}{}_{|\Lambda^1\rangle} \prod_{b=1}^{m_2}E_{j_b}^{-}(E_{i_2}^{-})^{n_2} {}_{|\Lambda^2\rangle} \cdots \prod_{b=1}^{m_{l}}E_{j_b}^{-}(E_{i_l}^{-})^{n_l} {}_{|\Lambda^l\rangle} \prod_{b=1}^{m_{l+1}}E_{j_b}^{-}|\Lambda\rangle,$$
 where $n_0\neq \Lambda_{i_0}^0$.
The norm of this state is
\begin{eqnarray*}
  &&\langle\lambda^{''}|\lambda^{''}\rangle \\
&=& \langle\cdots (E_{i_0}^{+})^{n_0}(E_{i_0}^{-})^{n_0}{}_{|\Lambda^0\rangle}\prod_{b=1}^{m_1}E_{j_b}^{-} (E_{i_1}^{-})^{n_1}{}_{|\Lambda^1\rangle} \prod_{b=1}^{m_2}E_{j_b}^{-}(E_{i_2}^{-})^{n_2} {}_{|\Lambda^2\rangle} \cdots \prod_{b=1}^{m_{l}}E_{j_b}^{-}(E_{i_l}^{-})^{n_l} {}_{|\Lambda^l\rangle} \prod_{b=1}^{m_{l+1}}E_{j_b}^{-}|\Lambda\rangle\\
   &=&\langle\cdots (E_{i_0}^{+})^{n_0-1}(E_{i_0}^{-})^{n_0-1}{}_{|\Lambda^0\rangle}\prod_{b=1}^{m_1}E_{j_b}^{-} (E_{i_1}^{-})^{n_1}{}_{|\Lambda^1\rangle} \prod_{b=1}^{m_2}E_{j_b}^{-}(E_{i_2}^{-})^{n_2} {}_{|\Lambda^2\rangle} \cdots \prod_{b=1}^{m_{l}}E_{j_b}^{-}(E_{i_l}^{-})^{n_l} {}_{|\Lambda^l\rangle} \prod_{b=1}^{m_{l+1}}E_{j_b}^{-}|\Lambda\rangle\\
&&\cdot(\Lambda_{i_0}^0-(n_0-1))n_0\\
&=& \prod_{k=1}^{n_0}(\Lambda_{i_0}^0-k+1)k\langle\lambda|\lambda\rangle\\
&=&\prod_{k=1}^{n_0}(\Lambda_{i_0}^0-k+1)k\prod_{k=1}^{l}(\Lambda^k_{i_k}!)^2.
\end{eqnarray*}

\subsection{Minuscule Representations}
For minuscule representations, the fundamental weight is the highest weight. Table \ref{tm} presents a complete list of minuscule fundamental weights for simple Lie algebras \cite{Green}, which are the highest-weight vectors. For the minuscule representations, all the strings in the weight spaces are two terms long,
$$(\lambda, \alpha^{\vee})=2\frac{(\lambda,\alpha)}{(\alpha,\alpha)}\leq 1.$$
In \cite{Sh06}, we conjecture a formula of the factors in the  braces of formula (\ref{kwef}) for minuscule representations. Calculating the inner product of states in these representations would help prove this conjecture.   The weights in the fundamental representation  $\rho_2$ of $A_4$ are displayed as follows.
$$
\begin{array}{ccccccc}
 (0,1,0,0) &  &  &  &  &  &  \\
  \downarrow \alpha_2  &  &  &  &  &  &  \\
 (1,-1,1,0) & \xrightarrow{\alpha_1} & (-1,0,1,0)  &  &  &  &  \\
  \downarrow \alpha_3 &  & \downarrow \alpha_3 & &  &  &  \\
  (1,0,-1,1) &\xrightarrow{\alpha_1} & (-1,1,-1,1) & \xrightarrow{\alpha_2}& (0,-1,0,1) &  &  \\
 \downarrow \alpha_4 &  & \downarrow \alpha_4  & & \downarrow \alpha_4 &  & \\
  (1,0,0,-1) &\xrightarrow{\alpha_1} & (-1,1,0,-1) & \xrightarrow{\alpha_2}& (0,-1,1,-1) & \xrightarrow{\alpha_3}& (0,0,-1,0).
\end{array}
$$

\begin{table}
\begin{tabular}{cc}\hline
Type   &   \{$i$: $\omega_i$ is minuscule \} \\
$A_l$  & $ 1,2,\cdots, l$\\
$B_l$  & $l$ \\
$C_l$  & 1  \\
$D_l$  & $1,l-1,l$  \\
 $E_6$   & 1,5 \\
 $E_7$   & 6 \\
 $E_8$   &  none\\
  $F_4$  &  none \\
 $G_2$   &  none\\  \hline
\end{tabular}
\caption{Minuscule fundamental weights for simple Lie algebras. }
\label{tm}
\end{table}

Since there are only two elements in the string of weights (\ref{string}) for minuscule representations, we have $n_k=1,(k=1,\cdots,l)$ in formula (\ref{proppp}).
Thus, the norms of the states $|\lambda^{'''}\rangle$ is
$$\langle^{'''}\lambda|\lambda^{'''}\rangle=1.$$

Furthermore,  the inner product of weights in these representations can be  completely determined. The  proof of the   following theorem show the subtleties  of  Algorithm \ref{algo} for calculating inner products.
\begin{theorem}\label{mf}
For minuscule representations, the inner product of states   defined by paths ending with  the same weight in the weight diagram   is one.
\end{theorem}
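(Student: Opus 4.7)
The plan is to strengthen the assertion and prove by induction on the level $k$ of the weight $w$ that, in a minuscule representation, any two path-states $|\upsilon^1_w\rangle$ and $|\upsilon^2_w\rangle$ reaching $w$ coincide with a single unit vector $|e_w\rangle$ spanning the one-dimensional weight space at $w$; the claimed identity $\langle\upsilon^1_w|\upsilon^2_w\rangle=1$ then follows. The base case $k=0$ is the singleton $|\Lambda\rangle$.

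For the inductive step, write $|\upsilon^1_w\rangle=E_i^-|w^{k,1}\rangle$ and $|\upsilon^2_w\rangle=E_j^-|w^{k,2}\rangle$, and use the inductive hypothesis to identify $|w^{k,1}\rangle$ with $|e_{w+\alpha_i}\rangle$ and $|w^{k,2}\rangle$ with $|e_{w+\alpha_j}\rangle$. When $i=j$ the two path-states are manifestly equal, and the norm $\langle\upsilon^1_w|\upsilon^1_w\rangle=\langle e_{w+\alpha_i}|E_i^+E_i^-|e_{w+\alpha_i}\rangle$ collapses via $[E_i^+,E_i^-]=H_i$ to $(w+\alpha_i)_i\langle e_{w+\alpha_i}|e_{w+\alpha_i}\rangle=1$; here $E_i^+|e_{w+\alpha_i}\rangle=0$ because the minuscule $\alpha_i$-string through $w+\alpha_i$ has length two, and $(w+\alpha_i)_i=1$ for the same reason.

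The heart of the argument is the case $i\neq j$. The structural observation is that the minuscule constraint $\langle\mu,\alpha^\vee\rangle\in\{-1,0,1\}$ forces $\alpha_i,\alpha_j$ to be non-adjacent in the Dynkin diagram: from $\langle w,\alpha_i^\vee\rangle=-1$ one computes $\langle w+\alpha_j,\alpha_i^\vee\rangle=-1+A_{ji}\in\{-1,0,1\}$, which forces $A_{ji}=0$, and the same relation shows $u:=w+\alpha_i+\alpha_j$ is again a weight. Non-adjacency yields $[E_i^+,E_j^-]=0$, and choosing paths through $u$ the inductive hypothesis produces the identifications $E_i^-|e_u\rangle=|e_{w+\alpha_j}\rangle$ and $E_j^-|e_u\rangle=|e_{w+\alpha_i}\rangle$. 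Substituting into $\langle\upsilon^1_w|\upsilon^2_w\rangle=\langle e_{w+\alpha_i}|E_i^+E_j^-|e_{w+\alpha_j}\rangle$, commuting $E_i^+$ past $E_j^-$, and applying $E_i^+E_i^-|e_u\rangle=(H_i+E_i^-E_i^+)|e_u\rangle=u_i|e_u\rangle=|e_u\rangle$ (using $u_i=1$ and $E_i^+|e_u\rangle=0$) reduces the inner product to $\langle e_{w+\alpha_i}|e_{w+\alpha_i}\rangle=1$. Since two unit vectors in a one-dimensional real weight space with inner product $1$ must agree, the strengthened inductive claim is preserved.

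The main obstacle is the structural lemma that two distinct simple roots taking $w$ upward to weights of a minuscule representation must be non-adjacent in the Dynkin diagram, with their sum added to $w$ also being a weight; once this is in hand the remainder is a routine unwinding using the commutation relations (\ref{cn}) together with the inductive hypothesis.
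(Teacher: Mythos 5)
Your proof is correct, and its computational core is the same as the paper's: peel off the top of the two paths using $[E_i^+,E_j^-]=\delta_{ij}H_j$, note that in a minuscule representation $H_i$ acts by $+1$ on the intermediate weight and $E_i^+$ annihilates it because every $\alpha_i$-string has length two, and thereby reduce the inner product one or two levels at a time. Where you genuinely go beyond the paper is in the organization and in one substantive point. The paper's reduction in case (a) silently assumes the two paths form a ``diamond'' at the top --- that they agree up to level $n-2$ and differ only in the order of the last two steps $\alpha_i,\alpha_j$ --- and then says ``repeat''; for two arbitrary paths to the same weight this is not automatic. Your strengthened inductive hypothesis (all path-states to a weight of level $\le k$ are literally the \emph{same} unit vector $|e_w\rangle$) dissolves this issue, since it lets you replace $|w^{k,2}\rangle$ by the path-state through $u=w+\alpha_i+\alpha_j$; and your structural lemma that $A_{ji}=0$ and that $u$ is again a weight is exactly the justification the paper omits. (Incidentally, this is closer in spirit to the alternative induction-on-level proof the paper attributes to \cite{Sh06}.) Two minor remarks: the vanishing of $[E_i^+,E_j^-]$ for $i\ne j$ follows from (\ref{cn}) for \emph{any} distinct simple roots, adjacent or not --- non-adjacency is only needed to show $u$ is a weight and $u_i=1$; and your final step ``unit vectors with inner product $1$ coincide'' quietly uses positive-definiteness of the contravariant form, which is standard for finite-dimensional irreducible representations but worth stating, since it is what upgrades ``all pairwise inner products equal $1$'' to the vector equality your induction actually requires.
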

\begin{proof}
  The  states $|\upsilon^{n,i}_w\rangle$ and $|\upsilon^{n,j}_w\rangle$ corresponding to weight $w$ with level $n$   are given by
$$|\upsilon^{n,i}\rangle=|E^{-}_{i_n}\cdots E^{-}_{i_1}|\Lambda\rangle, \quad\quad |\upsilon^{n,j}\rangle=|E^{-}_{j_n}\cdots E^{-}_{j_1}|\Lambda\rangle.$$

To prove the theorem, we need to consider two cases as shown in Fig.(\ref{t2})(a) and (b).
For the first case, the inner product of $|\upsilon^{n,i}_w\rangle$ and $|\upsilon^{n,j}_w\rangle$ is
\begin{eqnarray}\label{m1}
  \langle\upsilon^{n,j}_w |\upsilon^{n,i}_w\rangle&=& \langle\Lambda|E^{+}_{i_1} \cdots E^{+}_{i_n}|E^{-}_{j_n}\cdots E^{-}_{j_1}|\Lambda\rangle \nonumber\\
  &=& \langle\Lambda|E^{+}_{i_1} \cdots E^{+}_{i_{n-2}}(E^{+}_{j_n}E^{+}_{i_n} E^{-}_{j_n}E^{-}_{i_n})E^{-}_{j_{n-2}} \cdots E^{-}_{j_1}|\Lambda\rangle \nonumber\\
  &=& \langle\Lambda|E^{+}_{i_1} \cdots E^{+}_{i_{n-2}} E^{+}_{j_n}E^{-}_{j_n}(H_{i_n}+ E^{-}_{i_n}E^{+}_{i_n})E^{-}_{j_{n-1}}\cdots E^{-}_{j_1}|\Lambda\rangle \nonumber\\
   &=& \langle\Lambda|E^{+}_{i_1} \cdots E^{+}_{i_{n-2}} (E^{+}_{j_n}E^{-}_{j_n})E^{-}_{j_{n-2}}\cdots E^{-}_{j_1}|\Lambda\rangle \nonumber\\
 &=& \langle\Lambda|E^{+}_{i_1} \cdots E^{+}_{i_{n-2}} E^{-}_{j_{n-2}}\cdots E^{-}_{j_1}|\Lambda\rangle    \nonumber\\
 &=& 1.
\end{eqnarray}
For the second case, the inner product is
\begin{eqnarray}\label{m2}
    \langle\upsilon^{n,j}_w |\upsilon^{n,i}_w\rangle&=& \langle\Lambda|E^{+}_{i_1} \cdots E^{+}_{i_n}|E^{-}_{j_n}\cdots E^{-}_{j_1}|\Lambda\rangle \nonumber\\
  &=& \langle\Lambda|E^{+}_{i_1} \cdots E^{+}_{i_{n-1}}(E^{+}_{i_n} E^{-}_{i_n})E^{-}_{j_{n-1}} \cdots E^{-}_{j_1}|\Lambda\rangle \nonumber\\
  &=& \langle\Lambda|E^{+}_{i_1} \cdots E^{+}_{i_{n-1}} (H_{i_n}+ E^{-}_{i_n}E^{+}_{i_n})E^{-}_{j_{n-1}}\cdots E^{-}_{j_1}|\Lambda\rangle \nonumber\\
   &=& \langle\Lambda|E^{+}_{i_1} \cdots E^{+}_{i_{n-1}} E^{-}_{j_{n-1}}\cdots E^{-}_{j_1}|\Lambda\rangle \nonumber\\
 &=& 1.
\end{eqnarray}
We draw the conclusion.
\end{proof}
In \cite{Sh06}, another proof of Theorem \ref{mf} is provided by induction on the level of weights.

\subsection{Affine Lie Algebras}
The algorithm described in \ref{algo} for calculating the inner product is applicable  for affine Lie algebra $\hat{\mathfrak{g}}$. The Dynkin diagram of $\hat{\mathfrak{g}}$ is obtained from that of $\mathfrak{g}$ by adding  an extra node representing the  extra simple root $\alpha_0$.
Given a set of affine simple roots and a scalar product \cite{cft}, the extended Cartan matrix of affine Lie algebras is defined by
$$\hat{A}_{ji}=(\alpha_i,\alpha_j^{\vee})\quad\quad 0\leq i , j\leq r.$$
The matrix $\hat{A}$ encodes the whole structure of $\hat{\mathfrak{g}}$.
In the Chevally basis, the communication relations for the generators associated with the simple roots of $\hat{\mathfrak{g}}$ can be written as
\begin{equation}\label{acn}
 [E_i^{+}, E_{j}^{-}] = \delta_{ji}H_{j}, \quad  [H_{i}, E^{\pm}_{j}] = \pm \hat{A}_{ji}E^{\pm}_{j}, \quad [H_i, H_j] =0,
\end{equation}
where now $i,j=0,1,\cdots,r$. The communication relations are the same as those in formula (\ref{cn}), with the exception of the operators corresponding to  $\alpha_0$.
However,  this formulation does not explicitly reveal the infinite-dimensional nature of $\hat{\mathfrak{g}}$.

The procedure  that lists the weights in irreducible highest-weight representations of $\mathfrak{g}$ also works for $\hat{\mathfrak{g}}$. We simply have to keep track of an additional Dynkin label. However, this algorithm does not terminate in the affine case.

Since the communication relation for the generators and the construction of states are the same as those for the semisimple Lie algebra $\mathfrak{g}$,   Algorithm \ref{algo} for calculating the inner product of states in the highest-weight representations is applicable to the affine Lie algebra $\hat{\mathfrak{g}}$.


\subsection{Kapustin-Witten Equations}\label{kw}
We begin by reviewing the Kapustin-Witten equations. An extensive introduction to this topic can be found in \cite{vm}.


The  localization equations of the  twisted $N=4$ super Yang-Mills theory can be  applied to  the description of the Khovanov homology of knots \cite{ComplexCS,5knots, GaiottoWitten}. On a half space  $V=\mathbb{R}^3\times \mathbb{R}_+$,  the supersymmetry conditions lead to the  Kapustin-Witten  equations.
As described in \cite{5knots},  the \textbf{KW} equations  are
\begin{eqnarray}
F-\phi\wedge\phi+\ast{ d}_A\phi =0=d_A\ast\phi\,,\label{bogomol1}
\end{eqnarray}
where $d_A$ is the covariant exterior derivative associated with a connection $A$, and $\phi$ is  one-form valued in the adjoint of the gauge group $G$. There is a Lie product understood in the $\phi\wedge\phi$ terms and $\ast$ denotes the Hodge duality.
Different reductions of the  equations lead to other well known equations {\it e.g.}, Nahm's equations, Bogomolny equations or Hitchin equations.

The solutions of the model   were studied  in \cite{5knots}  \cite{Henningson2} \cite{Henningson}.
As we all know, all kinds of Toda systems were studied a lot recently in \cite{MPAG} and the Toda systems have very nice structures from the point of solutions. For any simple compact gauge group,  after reducing to a Toda system,
in \cite{vm}, V. Mikhaylov conjectured a formula of the solutions of the model  for the boundary 't~Hooft operator.
The 't Hooft operator corresponds to  a cocharacter $\omega\in\Gamma^\vee_{ch}$.  Let $\Delta$ be the set of simple roots $\alpha_i$, and then $\alpha_i(\hat{\omega})=m_i$ with $\hat{\omega}\equiv\omega+\delta^\vee$, $\delta^\vee\in \mathfrak{b}$ is the dual of the Weyl vector in the sense that $\alpha_i(\delta^\vee)=1$.  $E_\alpha$ are the raising generators corresponding to the simple roots, and  then the explicit fields on the solution are
\begin{eqnarray}
&&\phi_0=-\frac{i}{2\rho}\partial_\sigma\chi(\sigma)\,,\nonumber\\
&&\varphi=\frac{1}{r}\sum_{\alpha\in\Delta}\exp\left[\alpha(i\omega\theta+\frac12\chi(\sigma))\right]E_\alpha\,,\nonumber\\
&&A=-i\left(\hat{\omega}+\frac12\frac{y}{\sqrt{y^2+r^2}}\partial_\sigma\chi(\sigma)\right){\rm d}\theta\, ,\nonumber
\end{eqnarray}
where $\chi(\sigma)=\sum \chi_i(\sigma)H_i$ with coroots  $H_i$. The functions $\chi_i(\sigma)$ are conjectured as follows,
\begin{eqnarray}\label{kwef}
 &&\rm e^{-\chi_s(\sigma)}\\
 &=&2^{-B_s}\sum_{w\in\Delta_s}\left[\exp\left(2\sigma w(\hat{\omega})\right)\,\langle v_w(\hat{\omega})|v_w(\hat{\omega})\rangle\,(-1)^{n(w)}\,\prod_{\beta_a\in\Delta_+}\left(\beta_a(\hat{\omega})\right)^{-2\langle w,\beta_a\rangle/\langle\beta_a,\beta_a\rangle}\right],\nonumber
\end{eqnarray}
where $B_s=2\sum_j  A_{s,j}^{-1}$ with Cartan matrix $A_{i,j}$. For a weight $w=\Lambda_s-\sum\limits_{l=1}^{n(w)}\alpha_{j_l},\ \  \alpha_{j_l} \in \Delta$ in the fundament representation  $\rho_s$,  the  vector  $|\upsilon_{w}(\hat{\omega})\rangle$  is defined as follows
$$|\upsilon_{w}(\hat{\omega})\rangle=\sum\limits_\textrm{\textbf{s}}\prod\limits_{a=1}^{n(w)}\frac{1}{w(\hat{\omega})-w_a(\hat{\omega})}
E_{j_{n(w)}}^-\cdots E_{j_1}^-|\Lambda_s\rangle,$$
where $\textrm{\textbf{s}}$ enumerate ways in which the weight $w$ can be reached from the highest-weight, i.e each $\textrm{\textbf{s}}$ corresponds to the following sequences
$$E_{j_{n(w)}}^-\cdots E_{j_1}^-|\Lambda_s\rangle.$$
In order to prove this conjecture, we  need to check the following boundary condition
\begin{equation}\label{bbc}
 e^{-\chi_s(\sigma)}\mid_{\sigma\rightarrow 0}=0.
\end{equation}
 For other related  work on these equations, see \cite{uhlenbeck}\cite{M-Witten}\cite{He}\cite{Yuuji}\cite{Teng}.

To check the boundary condition (\ref{bbc}), we have to compute the following inner product
$$W_w= \langle \upsilon_{w}(\hat{\omega})|\upsilon_{w}(\hat{\omega})\rangle,$$
which involves the inner products
\begin{equation*}
  \langle\upsilon^{i}_{w}|\upsilon^{j}_{w}\rangle=\langle\Lambda_s|E^{+}_{i_1} \cdots E^{+}_{i_n}|E^{-}_{j_n}\cdots E^{-}_{j_1}|\Lambda_s\rangle,
\end{equation*}
where $E^{-}_{j_n}\cdots E^{-}_{j_1}$ and $E^{+}_{i_1} \cdots E^{+}_{i_n}$  are two  sequences from the highest-weight $\Lambda_s$ to weight $w$.

Algorithm \ref{algo} and the results in previous subsection would be helpful for the calculation of the inner product
\begin{equation*}
  \langle\upsilon^{i}_{w}|\upsilon^{j}_{w}\rangle=\langle\Lambda_s|E^{+}_{i_1} \cdots E^{+}_{i_n}|E^{-}_{j_n}\cdots E^{-}_{j_1}|\Lambda_s\rangle,
\end{equation*}
In \cite{Sh06}, we give another form of the factors in the square bracket of formula (\ref{kwef})  for certain weights of highest-weight representations of simply laced Lie algebras, which bypass the above  computation difficulties.

\section*{Acknowledgments}
Chuanzhong Li is supported by the National Natural Science Foundation
of China under Grant No.12071237.
This work was supported by a grant from  the Postdoctoral Foundation of Zhejiang Province.

\section*{Statements and Declarations}

\subsection*{Funding and/or Conflicts of interests/Competing interests}
\begin{center}
\textbf{Funding and/or Conflicts of interests/Competing interests}
\end{center}
The authors do not have any possible conflicts of interest.

\subsection*{Data Availability Statement}
\begin{center}
\textbf{Data Availability Statement}
\end{center}
The data that support the findings of this study are available from the corresponding author, [BaoShou], upon reasonable request.

\appendix
\section{ Second Fundamental Representation of  $G_2$ }\label{ap}
\begin{figure}[!ht]
  \begin{center}
    \includegraphics[width=6in]{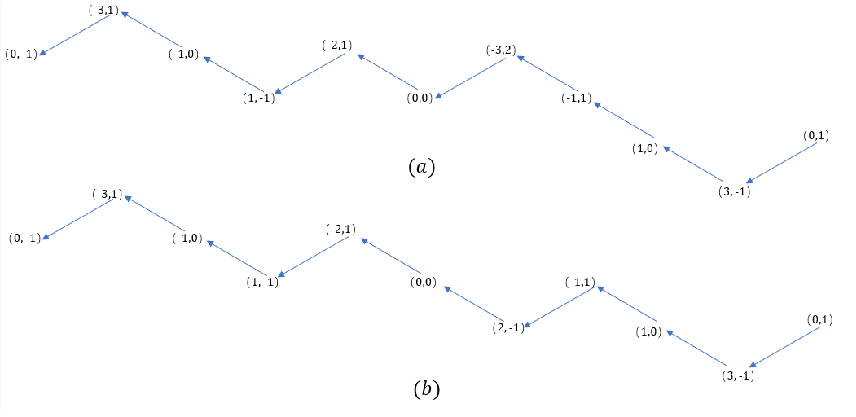}
  \end{center}
  \caption{Paths $a$ and $b$ from the highest-weight state to the lowest weight state in the fundamental representation $(0,1)$ of $G_2$. }
  \label{gab}
\end{figure}
\begin{figure}[!ht]
  \begin{center}
    \includegraphics[width=6in]{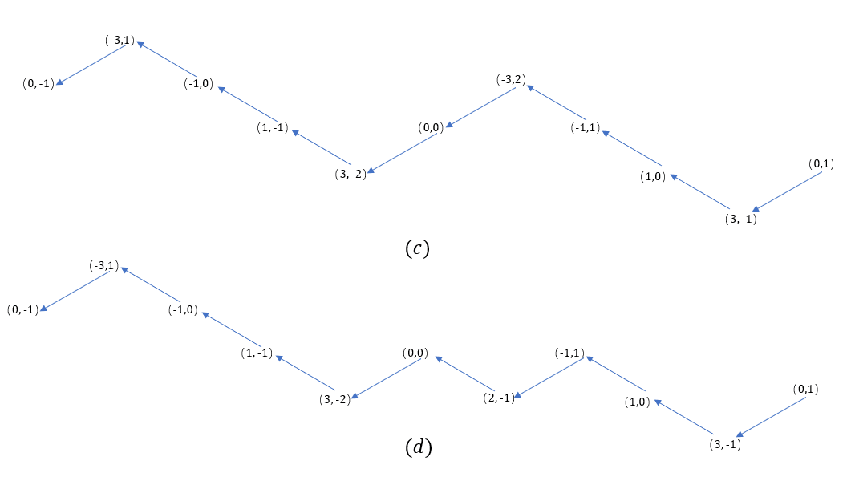}
  \end{center}
  \caption{Paths $c$ and $d$ from the highest-weight state to the lowest weight state in the fundamental representation $(0,1)$ of $G_2$.}
  \label{gcd}
\end{figure}
In this Appendix, using Algorithm \ref{algo}, we calculate  several  inner products  of states corresponding to the weights  of the fundamental representation $(0,1)$ of $G_2$ as shown in Fig.(\ref{G2}).
There are four paths from the  highest-weight vector to the lowest weight vector as shown in Figs.(\ref{gab}) and (\ref{gcd}), corresponding to the  four states $|\upsilon^a_{(0,1)}\rangle$, $|\upsilon^b_{(0,1)}\rangle$, $|\upsilon^c_{(0,1)}\rangle$, and $|\upsilon^d_{(0,1)}\rangle$.


Set the normalization of the highest-weight  state to be one.
$$ \langle\upsilon_{(0,1)} |\upsilon_{(0,1)}\rangle =\langle \Lambda_2|\Lambda_2\rangle=1.$$

The inner product of state $|\upsilon_{(3,-1)}\rangle$ is
$$ \langle\upsilon_{(3,-1)} |\upsilon_{(3,-1)}\rangle =\langle \Lambda_2| E^{+}_{2}E^{-}_{2}|\Lambda_2\rangle=1.$$

The inner product of state $|\upsilon_{(1,0)}\rangle$ is
$$ \langle\upsilon_{(1,0)} |\upsilon_{(1,0)}\rangle=\langle \Lambda_2| E^{+}_{2}E^{+}_{1}E^{-}_{1}E^{-}_{2}|\Lambda_2\rangle=\langle\upsilon_{(3,-1)} |E^{+}_{1}E^{-}_{1}|\upsilon_{(3,-1)}\rangle=3\langle\upsilon_{(3,-1)}|\upsilon_{(3,-1)}\rangle=3.$$

The inner product of state $|\upsilon_{(-1,1)}\rangle$ is
\begin{eqnarray*}
  \langle\upsilon_{(-1,1)} |\upsilon_{(-1,1)}\rangle &=& \langle \Lambda_2| E^{+}_{2}E^{+}_{1}E^{+}_{1}E^{-}_{1}E^{-}_{1}E^{-}_{2}|\Lambda_2\rangle\\
&=&\langle\upsilon_{(3,-1)} |E^{+}_{1}4 E^{-}_{1}|\upsilon_{(3,-1)}\rangle\\
&=&4\langle\upsilon_{(1,0)} |\upsilon_{(1,0)}\rangle\\
&=&12.
\end{eqnarray*}

The inner product of state $|\upsilon_{(-3,2)}\rangle $ is
\begin{eqnarray*}
 \langle\upsilon_{(-3,2)} |\upsilon_{(-3,2)}\rangle
 &=&\langle \Lambda_2| E^{+}_{2}E^{+}_{1}E^{+}_{1}E^{+}_{1}|E^{-}_{1}E^{-}_{1}E^{-}_{1}E^{-}_{2}|\Lambda_2\rangle\\
 &=& \langle\upsilon_{(3,-1)}|E^{+}_{1}E^{+}_{1}3E^{-}_{1}E^{-}_{1}|\upsilon_{(3,-1)}\rangle \\
 &=& 3\langle\upsilon_{(-1,1)} |\upsilon_{(-1,1)}\rangle\\
 &=&36.
\end{eqnarray*}

The inner product of state $|\upsilon_{(2,-1)}\rangle $ is
\begin{eqnarray*}
\langle\upsilon_{(2,-1)} |\upsilon_{(2,-1)}\rangle
 &=&\langle \Lambda_2| E^{+}_{2}E^{+}_{1}E^{+}_{1}E^{+}_{2}| E^{-}_{2}E^{-}_{1}E^{-}_{1}E^{-}_{2}|\Lambda_2\rangle\\
 &=&\langle\upsilon_{(-1,1)} 1 \upsilon_{(-1,1)}\rangle\\
&=&12.
\end{eqnarray*}

The inner product of state $ |\upsilon^1_{(0,0)}\rangle $ is
\begin{eqnarray*}
\langle\upsilon^1_{(0,0)} |\upsilon^1_{(0,0)}\rangle
&=&\langle \Lambda_2| E^{+}_{2}E^{+}_{1}E^{+}_{1}E^{+}_{2}E^{+}_{1}|E^{-}_{1} E^{-}_{2}E^{-}_{1}E^{-}_{1}E^{-}_{2}|\Lambda_2\rangle\\
&=&\langle\upsilon_{(2,-1)} 2\upsilon_{(2,-1)}\rangle\\
&=&24.
\end{eqnarray*}

The inner product  $\langle\upsilon^2_{(0,0)} |\upsilon^1_{(0,0)}\rangle$ is
\begin{eqnarray}\label{p21}
&=&\langle \Lambda_2| E^{+}_{2}E^{+}_{1}E^{+}_{1}E^{+}_{1}E^{+}_{2}|E^{-}_{1} E^{-}_{2}E^{-}_{1}E^{-}_{1}E^{-}_{2}|\Lambda_2\rangle\\
&=&\langle \Lambda_2| E^{+}_{2}E^{+}_{1}E^{+}_{1}E^{+}_{1}|E^{-}_{1} (H_2+E^{-}_{2}E^{+}_{2})|\upsilon_{(-1,1)}\rangle\nonumber\\
&=& \langle\upsilon_{(-3,2)} |\upsilon_{(-3,2)}\rangle\nonumber\\
&=&36.\nonumber
\end{eqnarray}

The inner product of state $|\upsilon^2_{(0,0)}\rangle $ is
\begin{eqnarray*}
\langle\upsilon^2_{(0,0)} |\upsilon^2_{(0,0)}\rangle
&=&\langle \Lambda_2| E^{+}_{2}E^{+}_{1}E^{+}_{1}E^{+}_{1}E^{+}_{2}|E^{-}_{2}E^{-}_{1} E^{-}_{1}E^{-}_{1}E^{-}_{2}|\Lambda_2\rangle\\
&=& \langle\upsilon_{(-3,2)}2\upsilon_{(-3,2)}\rangle\\
&=&72.
\end{eqnarray*}

The inner product  $\langle\upsilon^1_{(0,0)} |\upsilon^2_{(0,0)}\rangle$ is
\begin{eqnarray*}
&=&\langle \Lambda_2| E^{+}_{2}E^{+}_{1}E^{+}_{1}E^{+}_{2}E^{+}_{1}|E^{-}_{2}E^{-}_{1} E^{-}_{1}E^{-}_{1}E^{-}_{2}|\Lambda_2\rangle\\
 &=&\langle \Lambda_2| E^{+}_{2}E^{+}_{1}E^{+}_{2}E^{+}_{1}E^{-}_{2}E^{+}_{1} E^{-}_{1} E^{-}_{1}E^{-}_{1}|\upsilon_{(3,-1)}\rangle\\
 &=&3 \langle\upsilon_{(2,-1)} |\upsilon_{(2,-1)}\rangle\\
&=&36.
\end{eqnarray*}

Next we compute the inner product  $\langle\upsilon^a_{(0,-1)}|\upsilon^a_{(0,-1)}\rangle $ which is the most complicated one for the inner product in the representation
\begin{eqnarray*}
&&\langle\upsilon^a_{(0,-1)}|\upsilon^a_{(0,-1)}\rangle\\
&=&\langle \Lambda_2| E^{+}_{2}E^{+}_{1}E^{+}_{1}E^{+}_{1}E^{+}_{2}E^{+}_{1}E^{+}_{2}E^{+}_{1}E^{+}_{1} (E^{+}_{2} E^{-}_{2}|_{|\upsilon^a_{(-3,1)}}E^{-}_{1}E^{-}_{1}E^{-}_{2}E^{-}_{1}E^{-}_{2}E^{-}_{1}E^{-}_{1}E^{-}_{1}E^{-}_{2}|\Lambda_2\rangle)\\
&=&\langle \Lambda_2| E^{+}_{2}E^{+}_{1}E^{+}_{1}E^{+}_{1}E^{+}_{2}E^{+}_{1}E^{+}_{2}E^{+}_{1}(E^{+}_{1}E^{-}_{1}E^{-}_{1}|_{|\upsilon^a_{(1,-1)}}E^{-}_{2}E^{-}_{1}|_{|\upsilon^a_{(0,0)}}E^{-}_{2}E^{-}_{1}E^{-}_{1}E^{-}_{1}|_{|\upsilon^a_{(3,-1)}}E^{-}_{2}|\Lambda_2\rangle)\\
&=&\langle \Lambda_2| E^{+}_{2}E^{+}_{1}E^{+}_{1}E^{+}_{1}E^{+}_{2}E^{+}_{1}E^{+}_{2}{}_{\langle\upsilon^a_{(1,-1)}|}|E^{+}_{1}\\
&&(0|_{|\upsilon^a_{(-1,0)}}E^{-}_{1}|_{|\upsilon^a_{(1,-1)}}E^{-}_{2}E^{-}_{1}|_{|\upsilon^a_{(0,0)}}E^{-}_{2}E^{-}_{1}E^{-}_{1}E^{-}_{1}|_{|\upsilon^a_{(3,-1)}}E^{-}_{2}|\Lambda_2\rangle\\
&&+0|_{|\upsilon^c_{(-1,0)}}E^{-}_{1}E^{-}_{1}|_{|\upsilon^c_{(3,-2)}}E^{-}_{2}E^{-}_{2}E^{-}_{1}E^{-}_{1}E^{-}_{1}|_{|\upsilon^a_{(3,-1)}}E^{-}_{2}|\Lambda_2\rangle\\
&&+3|_{|\upsilon^d_{(-1,0)}}    E^{-}_{1}E^{-}_{1}    |_{|\upsilon^d_{(3,-2)}}E^{-}_{2}E^{-}_{1}E^{-}_{2}E^{-}_{1}E^{-}_{1}E^{-}_{2}|\Lambda_2\rangle.)
\end{eqnarray*}
The first zero is the coefficient $(\Lambda_{i_0}^0-(n_0-1))n_0=(1-(2-1))2=0$. And the second zero is the eigenvalue of the operators $H_1$. So the first two terms vanish and the last one  is calculated as follows,
\begin{eqnarray*}
&&\langle\upsilon^a_{(0,-1)}|\upsilon^a_{(0,-1)}\rangle\\
&=&3\langle \Lambda_2| E^{+}_{2}E^{+}_{1}E^{+}_{1}E^{+}_{1}E^{+}_{2}E^{+}_{1}E^{+}_{2}{}_{\langle\upsilon^a_{(1,-1)}|}|(E^{+}_{1}
|_{|\upsilon^d_{(-1,0)}}    E^{-}_{1}E^{-}_{1}    |_{|\upsilon^d_{(3,-2)}}E^{-}_{2}E^{-}_{1}E^{-}_{2}E^{-}_{1}E^{-}_{1}E^{-}_{2}|\Lambda_2\rangle)\\
&=&3\langle \Lambda_2| E^{+}_{2}E^{+}_{1}E^{+}_{1}E^{+}_{1}E^{+}_{2}E^{+}_{1}E^{+}_{2}{}_{\langle\upsilon^a_{(1,-1)}|}|(
 4 |_{|\upsilon^d_{(1,-1)}} E^{-}_{1}   E^{-}_{2}|_{|\upsilon^d_{(0,0)}}E^{-}_{1}E^{-}_{2}|_{|\upsilon^d_{(-1,1)}}E^{-}_{1}E^{-}_{1}E^{-}_{2}|\Lambda_2\rangle)\\
&=&12\langle \Lambda_2| E^{+}_{2}E^{+}_{1}E^{+}_{1}E^{+}_{1}E^{+}_{2}E^{+}_{1}{}_{\langle\upsilon^a_{(-2,1)}|}|(
  |_{|\upsilon^d_{(-2,1)}} E^{-}_{1}|_{|\upsilon^a_{(0,0)}}   E^{-}_{2}E^{-}_{1}E^{-}_{1}E^{-}_{1}|_{|\upsilon^a_{(3,-1)}}E^{-}_{2}|\Lambda_2\rangle)\\
&=&36\langle \Lambda_2| E^{+}_{2}E^{+}_{1}E^{+}_{1}E^{+}_{1}E^{+}_{2}{}_{\langle\upsilon^a_{(0,0)}|}|(
 |_{|\upsilon^b_{(0,0)}} E^{-}_{1}   E^{-}_{2}E^{-}_{1}E^{-}_{1}E^{-}_{2}|\Lambda_2\rangle)\\
&=&12*36,
\end{eqnarray*}
where we have used formula (\ref{p21}).

Since the algorithm is an iterative process, we can use the results of the level $k$ when we do the calculations for the step for level $k+1$. For each step, the coefficients    are positive, which are consistent with Conjecture \ref{co}.

%

\newpage

\end{document}